\newcommand{\BlackBoxes}{\global\overfullrule5pt}
\newcommand{\R}{\mathbb{R}}
\newcommand{\N}{\mathbb{N}}
\newcommand{\E}{\mathbb{E}}
\newcommand{\bS}{\mathbb{S}}
\newcommand{\cL}{\mathcal{L}}
\renewcommand{\P}{\mathbb{P}}
\newtheorem{theorem}{Theorem}
\newtheorem{corollary}[theorem]{Corollary}
\newtheorem{proposition}[theorem]{Proposition}
\theoremstyle{definition}
\newtheorem{example}[theorem]{Example}
\newtheorem{remark}[theorem]{Remark}
\numberwithin{equation}{section} \numberwithin{theorem}{section}
\def\0{\kern0pt\-\nobreak\hskip0pt\relax}
 \def\@serieslogo{%
 \vbox to\headheight{%
 \parindent\z@ \fontsize{6}{7\p@}\selectfont
 \vss}}}
\def\makeoverbar#1#2#3#4#5#6#7{%
 \setbox0=\hbox{$\m@th#2\mkern#5mu{{}#3{}}\mkern#6mu$}%
 \setbox1=\null \dimen@=#4\fontdimen8#13 \dimen@=3.5\dimen@
 \advance\dimen@ by \ht0 \dimen@=-#7\dimen@ \advance\dimen@ by \wd0
 \ht1=\ht0 \dp1=\dp0 \wd1=\dimen@
 \dimen@=\fontdimen8#13 \fontdimen8#13=#4\fontdimen8#13
 \rlap{\hbox to \wd0{$\m@th\hss#2{\overline{\box1}}\mkern#5mu$}}
 \fontdimen8#13=\dimen@}
\def\mylabel#1#2{{\def\@currentlabel{#2}\label{#1}}}
\newcommand{\ind}{\textbf{1}}
\newcommand{\ma}{\text{max}}
\newcommand{\mi}{\text{min}}
\newcommand{\ol}{\overline}
\newcommand{\ul}{\underline}
\newcommand{\Rp}{\mathbb{R}^+}
\newcommand{\cP}{\mathcal{P}}
\newcommand{\cB}{\mathcal{B}}
\newcommand{\eps}{\varepsilon}
\newcommand{\Ydt}{Y_{\Delta t}}
\newcommand{\qdt}{q_{\Delta t}}
\newcommand{\dt}{\Delta t}
\newcommand{\up}{\uparrow}
\newcommand{\down}{\downarrow}
\newcommand{\vphi}{\varphi}
\begin{document}


\makeatletter \providecommand\@dotsep{5} \makeatother

\title[Gas Storage valuation with regime switching]{Gas storage valuation with regime switching}

\author[N. \smash{B\"auerle}]{Nicole B\"auerle${}^*$}
\address[N. B\"auerle]{Institute for Stochastics,
Karlsruhe Institute of Technology, D-76128 Karlsruhe, Germany}

\email{\href{mailto:nicole.baeuerle@kit.edu}
{nicole.baeuerle@kit.edu}}

\author[V. \smash{Riess}]{Viola Riess${}^*$}
\address[V. Riess]{Institute for Stochastics,
Karlsruhe Institute of Technology, D-76128 Karlsruhe, Germany}

\email{\href{mailto:viola.riess@kit.edu} {viola.riess@kit.edu}}

\thanks{${}^*$ Department of Mathematics,
Karlsruhe Institute of Technology, D-76128 Karlsruhe, Germany. Corresponding author: nicole.baeuerle@kit.edu, Tel.: +49-721-608-48152, Fax: +49-721-608-46066 }

\maketitle

\begin{abstract}
In this paper we treat a gas storage valuation problem as a Markov Decision Process. As opposed to existing literature we model
the gas price process as a regime-switching model. Such  a model has shown to fit market data quite well in \cite{chen2010implications}.
Before we apply a numerical algorithm to solve the problem, we first identify the structure of the optimal injection and withdraw policy.
This part extends results in \cite{Sec10}. Knowing the structure reduces the complexity of the involved recursion in the algorithms by
one variable. We explain the usage and implementation of two algorithms:  A Multinomial-Tree Algorithm and a Least-Square Monte Carlo
Algorithm. Both algorithms are shown to work for the regime-switching extension. In a numerical study we compare these two algorithms.
\end{abstract}

\vspace{0.5cm}
\begin{minipage}{14cm}
{\small
\begin{description}
\item[\rm \textsc{ Key words}]
{\small Gas Storage Valuation, Regime Switching, Markov Decision Process, Least-Square Algorithm, Multinomial-Tree Algorithm.}
\end{description}
}
\end{minipage}

\section{Introduction}\label{sec:intro}\noindent
In the EU, natural gas is still the second fuel used in final energy
consumption with a share of $22\%$, ahead of electricity at
$20\%$ and behind oil products (cp. \cite{eurogas}). A lot of factors
influence the gas price like weather conditions, economic growth, prices
of other fuel like oil, coal, carbon, growing share of renewables and
new production techniques for unconventional gas resources. In order to
handle the resulting gas price volatility, the gas storage capacity has
been increased constantly. The number of storage facilities in the EU in
year 2013 is 142 with a maximum working volume of 96597 million cubic
meters (cp. \cite{eurogas}). This storage on one hand can be used to balance
supply and demand but also on the other hand to create profit from an
active storage management on a mark-to-market basis. Contracts for
natural gas storage essentially represent real options on natural gas
prices. We refer to \cite{geman2005commodities} for an introduction to
the storage valuation problem.

In order to evaluate a contract on a gas storage facility we need a
stochastic model for the gas price process and a tool to find the
optimal dynamic injection and withdraw policy. The latter problem
results in a stochastic control problem and there are quite some papers
in the literature which deal with it. Most often, the gas price models
are continuous and somehow related to interest rate models. We thus get
a continuous stochastic control problem which can be solved via the
Hamilton-Jacobi-Bellman equation. However, this equation has to be
solved numerically (see e.g.
\cite{chen2007semi,thompson2009natural,chen2010implications}). An
alternative approach is to discretize the gas price process first and
then treat the resulting discrete-time problem as a Markov Decision
Process (see e.g.
\cite{BdJ08,Sec10,lai2010approximate,boogert2011gas,felix2012gas}). An
intermediate model has been considered in \cite{carmona2010valuation}
where the underlying process is continuous, however the operation
policies are restricted to switching policies where the storage level
switches between full and empty. The task can then be reduced to the
discrete problem of finding the optimal stopping times for the switches
of operation modes.  Surprisingly there is only the paper by
\cite{Sec10} which deals with obtaining the analytic structure of the
optimal policy. It is crucial there, but also in other papers, that
besides the natural upper and lower capacity bound of the storage, there
is a restriction on the maximum rate at which gas can be released from
the storage and a maximum rate at which gas can be injected into the
storage. These rates depend on the storage level and are typically
decreasing and convex or concave respectively. In \cite{Sec10} these
rates are assumed to be constant. In any case, as a result the optimal action not
only depends on the gas price but also on the storage level. The optimal
policy can be characterized by three regions which depend on the gas
price $p$: When the current gas storage level is below a certain bound
$\underline{b}(p)$, it is optimal to inject gas, either as much as
possible or up to $\underline{b}(p)$ which occurs first. If the current
gas storage level is above a certain bound $\bar{b}(p)$, it is optimal
to withdraw gas, either as much as possible or down to $\bar{b}(p)$
which occurs first. In case the level is in between, it is optimal to do
nothing.

In this paper we extend the result of \cite{Sec10} to more general
restrictions on the maximum injection and withdraw rate and to more
general gas price processes. Moreover we use the structure of the
optimal policy to implement efficient numerical algorithms for the
solution of the gas storage problem. More precisely we use a
regime-switching model and thus a non-Markovian gas price model. The
model is based on the mean-reverting model of \cite{Schwartz97}. In our
setting, there are two regimes which result in different time-dependent,
seasonal mean-reversion factors. It has been shown in \cite{Schwartz97,
jaillet2004valuation,chen2010implications} that a one-factor mean-reverting model does not
capture the dynamics of a typical gas forward curve. Hence there is on
the one hand the possibility to use multi-factor models, like in
\cite{cartea2008uk,boogert2011gas,lai2010approximate,mhm} or to introduce regime-switches,
see e.g. \cite{chen2010implications}. It is argued in \cite{chen2010implications} that
regime-switching models calibrate quite well to market data and are less
complex than multi-factor models. For a further discussion of gas price
models see \cite{eydeland2003energy}. Our gas price processes have no
jumps (for jump models see \cite{chen2007semi,chen2010implications}) but these could be
included easily.

As already mentioned, we use the structure of the optimal policy to
implement efficient numerical algorithms for the solution of the gas
storage problem. The basic structure of the algorithms is given by the
usual backward induction algorithm from the theory of Markov Decision Processes.
This algorithm suffers from the
curse of dimensionality. However, when implemented in a clever way, this
effect can be mitigated  (see e.g. \cite{powell2007approximate} for approximate dynamic
programming). We will see that
the usage of the structure of the optimal policy reduces the number of
optimization problems which have to be solved by one variable. The two
algorithms we compare, differ in the way the expectation in the recursion
is computed. In the first algorithm we approximate the continuous gas
price process by a recombining tree, using the method proposed in
\cite{Nel90}. This method has the advantage that the values of the tree
are unaffected by the regime which only influences the transition probabilities.
Other methods to construct trees are explained in
\cite{jaillet2004valuation,felix2012gas,bardou2009optimal}. However, these
methods do not extend easily to the regime-switching model. In the second algorithm we use
the Least-Square Monte Carlo Algorithm to approximate the expectation by
regression on a number of basis functions. This algorithm is well-known
from \cite{LS01} for the valuation of American options and first used for gas storage valuation by
\cite{BdJ08}. In \cite{boogert2011gas} the authors use the
Least-Square Monte Carlo Algorithm for a multi-factor gas price model
and discuss the impact of the choice of the basis functions. Here we show that this
method can also be applied to regime-switching models.

Our paper is organized as follows: In the next section we formulate the
gas storage valuation problem with regime switching gas price as a
Markov Decision Process. Afterwards we prove in Section \ref{sec:structure}
the structure of the optimal policy and consider some special cases.
Next in Section \ref{sec:algorithms} we specify our gas price model and
explain the Multinomial-Tree Algorithm and the Least-Square Monte Carlo
Algorithm. The parameters of the gas price model are taken from \cite{BBK08}.
In this section we also discuss an efficient discretization of the gas storage.
A numerical analysis of the two algorithms can be
found in Section \ref{ssec:numeric}, where we use a specific example. We discuss the choice
of the number of grid points for the gas storage level, the number of simulations
necessary for the Least-Square Monte Carlo Algorithm and the number of nodes for the
tree in the Multinomial-Tree Algorithm.

\section{The Gas Storage Valuation Problem as a Markov Decision Process with Regime Switching}\label{sec:model}\noindent
We formulate the gas storage valuation problem as a Markov Decision Process (MDP) with regime switching.  For the theory of Markov Decision Processes we
refer to \cite{BR11,ber02,herlerlas96,puterman}. Let $(P_n)$ be the stochastic gas price process which is defined on the measurable space $(\cP, \cB(\cP))$, where $\cP \subseteq \Rp$. This price process is influenced by a Markov chain $(R_n)$, such that $(P_n, R_n)$ is a Markov process. We further assume that $(R_n)$ has finite state space $S_R$ and transition matrix $(q_{jk})_{j,k\in S_R}$. Additionally assume that $P_{n+1}$ and $R_{n+1}$ are conditionally independent under $R_n$ and $P_n$ and that $\cL(R_{n+1}|P_n, R_n)= \cL(R_{n+1}|R_n)$, where $\cL$ denotes the law of the corresponding random variables. Using these assumptions we get for $j, k \in S_R$, $p \in \cP$, $B\in \cB(\cP)$
\begin{align*}
&\P(P_{n+1} \in B , R_{n+1}=k | R_n=j, P_n=p)\\
 & = \P(P_{n+1} \in B| R_n=j, P_n=p)\P( R_{n+1}=k | R_n=j, P_n=p)\\
& = \P(P_{n+1} \in B| R_n=j, P_n=p)\P( R_{n+1}=k | R_n=j)\\
&=: Q_{n,j}(B|p) \cdot q_{jk}.
\end{align*}
Let $N \in \N$ be the finite planning horizon and denote by $b^\mi$ and $b^\ma$ the minimal and maximal capacity  of the gas storage facility, respectively. The state space of the Markov Decision Process is given by $E= [b^\mi, b^\ma]\times \cP\times S_R$ and we denote the elements of $E$ by $(x,p,r)\in E$ where $x$ is the current storage level, $p$ the current price and $r$ the current background state. The action space is given by $A= [b^\mi-b^\ma, b^\ma-b^\mi]$, where  $a$ is the change in the amount of gas. Hence  $a<0$ means that the amount of $|a|$ is withdrawn and $a>0$ means that $|a|$ is injected. Since we assume that actions are additionally restricted by maximal rates of changes $i_n^\mi$ and $i_n^\ma$ that depend on the current amount of gas in the storage, the admissible actions are restricted to the set $D_n(x)= \{a \in A | \max(b^\mi-x, i_n^\mi(x)) \le a \le \min(b^\ma -x, i_n^\ma(x))\}$. We assume that $i_n^\ma$ and $i_n^\mi$ are decreasing functions of the amount of gas in storage $x$. Furthermore $i_n^\ma$ is supposed to be concave and non-negative and $i_n^\mi$ is supposed to be convex and non-positive in the interval $[b^\mi, b^\ma].$ The set of admissible actions is illustrated in Figure \ref{fig:restriction}. A decision rule at time $n$ is a measurable mapping $f_n : E \to D_n$ and a policy $\pi=(f_0,\ldots,f_{N-1})$ is defined by a sequence of decision rules.\\

\begin{figure}[htbp]
\centering
\begin{tikzpicture}

\shadedraw[domain= 1:3 , color=black!10, variable= \x, smooth, left color=black!10,right color=black!10] (1,0) -- plot(\x, {min(sqrt(-0.2*\x+0.7), (3-\x))}) --  (1,0);

\shadedraw[domain= 1:3 ,  color=black!10,  variable= \x, smooth, left color=black!10,right color=black!10] (1,0) -- plot(\x, {max(1-\x,-sqrt(0.4*\x))}) -- (3,0);

\draw [->] (-0.5,0) -- (3.5,0);
\draw [->]  (0, -2.5) -- (0,3.5);

\node  [left, color=black] at (-0.1,1) {\small{$b^\mi$}};
\node  [left, color=black] at (-0.1,3) {\small{$b^\ma$}};

\node  [right, color= black] at (1.1,3.5) {\small{$b^\mi$}};
\node  [right, color= black] at (3.1,3.5) {\small{$b^\ma$}};

\node  [below, color= black] at (0.5,-0.5) {\small{$i^\mi_n$}};
\node  [above, color= black] at (3.5,0.1) {\small{$i^\ma_n$}};

\node  [left] at (-0.1,3.5) {\small{$a$}};

\node  [left] at (3.5,-0.2) {\small{$x$}};

\draw[-] (-0.05,1)--(0.05,1);
\draw[-] (-0.05,3)--(0.05,3);

\draw[dashed, color=black] (0,1) -- (3,-2);
\draw[dashed, color=black] (0,3) -- (3,0);

\draw[dashed, color= black] (1,3.5) -- (1,-2.5);
\draw[dashed, color= black] (3,3.5) -- (3,-2.5);

\draw[domain=1:3.25, smooth, dashed, color= black] plot (\x,{sqrt(-0.2*\x+0.7)}) ;

\draw[domain=0.5:3, smooth, dashed, color= black] plot (\x,{-sqrt(0.4*\x)}) ;

\node  [color=black] at (2.5,-0.4) {\small{$D_n$}};

\end{tikzpicture}
\caption[Restriktionenmenge]{Set of admissible actions $D_n$}\label{fig:restriction}
\end{figure}
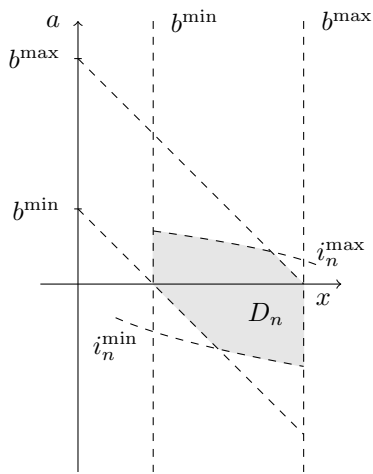

In order to include transaction costs, a loss of gas at the pump and/or a bid ask spread in the market, we introduce the ``ask price'' $k$ and the ``bid price'' $e$ for one quantity of gas. In particular we assume that $k(p)= (1+ w_1) p + z_1$ and $e(p) = (1- w_2)p - z_2$, where $w_1, w_2, z_1, z_2 \in \R_0^+$ such that $k(p) \ge e(p) \ge 0$ for each $p \in \cP$.  Thus we have fixed and proportional transaction cost. The one stage reward function $h$ of the Markov Decision Process is given by\\
\begin{align*}
h(p,a) = \begin{cases}
- k(p) \cdot a, & a>0,\\
0, & a=0 \\
-e(p) \cdot a, & a < 0.
\end{cases}
\end{align*}
Let $h_N$ denote the terminal reward function, that depends on the current amount of gas in the storage $x$ and the current price $p$. We assume that $h_N$ is a concave function of $x$.

\begin{example}[Some possible terminal reward functions]
One possible terminal reward function could be $h_N(x,p)= e(p) (x- b^\mi)$. Here the terminal reward corresponds to selling the gas that is available in the storage at time $N$. An alternative choice for a terminal reward function that includes a penalty when the amount of gas in the storage is below some amount $x_{\text{end}}$ is
\begin{align*}
h_N(x,p)=  \begin{cases}
e(p) (x- x_{\text{end}}), & x \ge x_{\text{end}},\\
-k(p) (x_{\text{end}}-x), & x \le x_{\text{end}}.
\end{cases}
\end{align*}
In order not to favour situations where the amount of gas in the storage is above the level $x_{\text{end}}$, we can replace $e(p) (x- x_{\text{end}})$ by $0$ for $x \ge x_{\text{end}}$. Another modification is to let the ``penalty'' grow exponentially to the distance to $x_{\text{end}}$ and not proportionally.
\end{example}

The transition kernel for the Markov Decision Process is given by
\begin{align}
\widetilde Q_n(d(x',p',j)|x,p,r,a)= Q_{n, r}(d p'|p) q_{r j} \otimes \delta_{x+a}(d x').
\end{align}
By definition of the transition kernel we get for any measurable function $v$ and admissible $(x,p,r,a)$ that
\begin{align}
\int v(x', p',j) \widetilde Q_n(d(x',p',j)|x,p,r,a) = \sum_{j\in S_R} q_{r j} \int v(x+a, p',j) Q_{n,r}(d p'|p) .
\end{align}

Having introduced all the necessary definitions, the objective is to find an optimal policy $\pi$ that maximizes the expected total reward over time. Therefore we define the value function as the maximal expected accumulated reward from time $n$ up to time $N$, that is for $(x,p,r)$ $\in$ $E$
\begin{align}
V_n(x,p,r) = \sup_{\pi \in \Pi} \E^\pi_{n,x,p,r} \Big[\sum\limits_{k=n}^{N-1} \beta^k h(P_k,f_k(X_k,P_k,R_k)) + \beta^N h_N(X_N, P_N)\Big]
\end{align}
where $\Pi$ denotes all admissible policies and $\beta\in(0,1]$ is a discount factor. $\E_{n,x,p,r}[\ldots]$ denotes the conditional expectation given $X_n=x, P_n=p, R_n=r$. Using the Bellman equation we know, that we can solve the problem of finding $V_0$ by
\begin{align}
V_N(x,p,r) &= h_N(x,p)\\
\begin{split}\label{eq:bellman}
V_n(x,p,r) &= \sup\limits_{a \in D_n(x)} \{h(p,a) +\beta \E_{n,x,p,r}[V_{n+1}(x+a, P_{n+1}, R_{n+1})]\}\\
&= \sup\limits_{a \in D_n(x)} \Big\{h(p,a) + \beta\sum\limits_{j\in S_R} q_{rj} \int V_{n+1 }(x+a, p', j) Q_{n,r}(dp'|p)\Big\}
\end{split}
\end{align}

We introduce the following operators for some measurable function $v$
\begin{align}
(L_nv)(x,p,r,a)& = h(p,a) + \beta \sum\limits_{j\in S_R} q_{rj} \int v(x+a, p', j) Q_{n,r}(dp'|p),\\
(T_n v)(x,p,r) &= \sup\limits_{a \in D_n(x)}  (L_nv)(x,p,r,a),
\end{align}
such that equation (\ref{eq:bellman}) can be written as $V_n(x,p,r) = (T_nV_{n+1})(x,p,r)$.

\begin{remark}
It can be shown that if there exists $c_1,c_2,c_3\in \R_0^+$, such that
\begin{itemize}
\item $|h_N(x,p,r)| \le c_2 (k(p) + c_1),$
\item $\E_{n,p,r}[k(P_{n+1})+ c_1] \le c_3 (k(p) +c_1),$
\end{itemize}
the expected values from the definitions above exist. Thus the value function is well defined. In what follows we denote by
$$\bS_s^+ := \{ v:E\to \R : v^+(x,p,r) \le c_2(k(p)+c_1) \;\mbox{for arbitrary}\; c_1, c_2\in\Rp\}.$$
For details see the concept of a bounding function in \cite{BR11}.
\end{remark}


\section{Properties of the Optimal Gas Storage Policy and the Value Function}\label{sec:structure}
In this section we identify the structure of the optimal gas storage policy as explicitly as possible. It follows essentially from some properties of the value function which are shown by induction.
\subsection{Structure of the optimal policy for the general model}\label{ssec:structurepolicy}
\begin{proposition}\label{prop:concavity}
Let $v\in \bS_s^+$. If $x\mapsto v(x,p,r)$ is concave, then $x\mapsto T_nv(x,p,r)$ is concave on $\Rp$ for each fixed $p \in \cP$ and $r\in S_R.$
\end{proposition}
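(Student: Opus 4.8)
The plan is to follow the standard route for concavity preservation under a maximal-reward operator: show that the inner objective $(x,a)\mapsto (L_nv)(x,p,r,a)$ is jointly concave, that the graph of the feasibility correspondence $x\mapsto D_n(x)$ is a convex set, and then argue that the partial supremum of a concave function over a convex-graph constraint is concave in the remaining variable. Concretely, I would fix $p\in\cP$ and $r\in S_R$, take storage levels $x_1,x_2$ together with $\lambda\in[0,1]$, write $x_\lambda=\lambda x_1+(1-\lambda)x_2$, and aim to exhibit a feasible action at $x_\lambda$ whose $L_nv$-value dominates $\lambda\,T_nv(x_1,p,r)+(1-\lambda)\,T_nv(x_2,p,r)$.

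First I would verify joint concavity of $L_nv$ in $(x,a)$. The transaction term $h(p,\cdot)$ is piecewise linear with slope $-e(p)$ for $a<0$ and $-k(p)$ for $a>0$; the standing assumption $k(p)\ge e(p)\ge 0$ forces this slope to be non-increasing across $a=0$, so $h(p,\cdot)$ is concave (and it is constant in $x$). For the continuation term, the substitution $(x,a)\mapsto x+a$ is affine, so $(x,a)\mapsto v(x+a,p',j)$ is concave by the hypothesis that $v$ is concave in its first argument; integrating against the nonnegative kernel $Q_{n,r}(\,\cdot\,|p)$, summing with the nonnegative weights $q_{rj}$, and multiplying by $\beta\ge 0$ all preserve concavity. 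Hence $L_nv$ is concave in $(x,a)$.

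Next I would check that $\{(x,a):a\in D_n(x)\}$ is convex, which is precisely where the shape assumptions on the rate functions are used. Given $a_i\in D_n(x_i)$, set $a_\lambda=\lambda a_1+(1-\lambda)a_2$. For the lower constraint, $b^\mi-x$ is affine and $i_n^\mi$ is convex, so both $b^\mi-x_\lambda$ and $i_n^\mi(x_\lambda)$ are at most $a_\lambda$, giving $a_\lambda\ge\max(b^\mi-x_\lambda,i_n^\mi(x_\lambda))$. Symmetrically, $b^\ma-x$ is affine and $i_n^\ma$ is concave, so $a_\lambda\le\min(b^\ma-x_\lambda,i_n^\ma(x_\lambda))$. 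Therefore $a_\lambda\in D_n(x_\lambda)$, i.e.\ the convex combination of admissible actions remains admissible at the convex combination of states.

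Finally I would assemble the two facts. For $\eps>0$ choose $\eps$-optimal actions $a_i\in D_n(x_i)$ with $(L_nv)(x_i,p,r,a_i)\ge T_nv(x_i,p,r)-\eps$; feasibility of $a_\lambda$ at $x_\lambda$ together with joint concavity of $L_nv$ yields
\[
\begin{aligned}
T_nv(x_\lambda,p,r)&\ge (L_nv)(x_\lambda,p,r,a_\lambda)\\
&\ge \lambda\,(L_nv)(x_1,p,r,a_1)+(1-\lambda)\,(L_nv)(x_2,p,r,a_2)\\
&\ge \lambda\,T_nv(x_1,p,r)+(1-\lambda)\,T_nv(x_2,p,r)-\eps,
\end{aligned}
\]
and letting $\eps\downarrow 0$ gives concavity. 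I expect the convex-graph verification of the third paragraph to be the conceptual crux, since it is exactly what the convexity of $i_n^\mi$ and concavity of $i_n^\ma$ are designed to deliver; the only genuine technical subtlety is the treatment of the supremum, where a maximizer need not exist a priori and the $\eps$-optimal selection makes the chain of inequalities rigorous. Here membership $v\in\bS_s^+$ guarantees that all integrals, and hence all quantities above, are finite, so the estimates are meaningful.
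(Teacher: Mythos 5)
Your proof is correct and follows essentially the same route as the paper's: choose $\eps$-optimal actions at $x_1$ and $x_2$, verify that their convex combination is admissible at the convex combination of states, and exploit concavity of $h(p,\cdot)$ and of $v$ before letting $\eps\searrow 0$. The only difference is cosmetic: you spell out the convexity of the graph of $x\mapsto D_n(x)$ (via the convexity of $i_n^\mi$ and concavity of $i_n^\ma$), a step the paper dismisses as obvious, and you package the estimate as joint concavity of $L_nv$ in $(x,a)$ rather than writing the chain of inequalities inline.
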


\begin{proof}
Let $x_1,x_2$ be admissible and $x= \alpha x_1+(1-\alpha) x_2$, $\alpha \in [0,1]$. Let $\eps >0$. Then there exist $a_1\in D_n(x_1)$ and $a_2\in D_n(x_2)$ such that
\begin{align*}
L_nv(x_1,p,r, a_1) \ge T_nv(x_1,p,r) -\eps, \\
L_nv(x_2,p,r, a_2) \ge T_nv(x_2,p,r) -\eps.
\end{align*}
Further we obviously have $\alpha a_1 + (1-\alpha) a_2 \in D_n(x)=D_n(\alpha x_1+(1-\alpha) x_2)$ by checking the restrictions.
Thus
\begin{align*}
 T_nv(x,p,r) &= &&T_nv(\alpha x_1+(1-\alpha) x_2,p,r)\\
&=&& \sup\limits_{a\in D_n(\alpha x_1+ (1-\alpha) x_2)}Lv(\alpha x_1+(1-\alpha) x_2,p,r,a)\\
&\ge&& Lv(\alpha x_1+(1-\alpha)x_2, p,r, \alpha a_1+(1-\alpha) a_2)\\
&= &&h(p,  \alpha a_1+(1-\alpha) a_2)\\
& && + \beta \sum\limits_{j\in S_R}q_{rj} \int v(\alpha x_1+(1-\alpha) x_2 + \alpha a_1+(1-\alpha) a_2,p',j) Q_{n,r}(dp'|p)\\
&
\ge && \alpha h(p, a_1) + (1-\alpha) h(p, a_2) \\
& && +\beta\sum\limits_{j\in S_R}q_{rj} \int \alpha v( x_1+a_1,p',j) + (1-\alpha) v( x_2+a_2,p',j) Q_{n,r}(dp'|p)\\
&= &&\alpha Lv(x_1,p,r,a_1) + (1-\alpha) Lv(x_2,p,r,a_2)\\
& \ge &&\alpha(T_nv( x_1,p,r)-\eps) + (1-\alpha)(T_nv(x_2,p,r)-\eps) \\
&= && \alpha  T_nv( x_1,p,r) + (1-\alpha) T_nv( x_2,p,r) - \eps,
\end{align*}
where the second inequality follows by assumption and the fact that $a \mapsto h(p,a)$ is concave. The assertion follows by taking $\eps\searrow 0.$
\end{proof}

For notational simplicity we now introduce the so called ``continuation value'', that is
\begin{align*}
U_N(x,p,r) &= 0\\
U_n(x,p,r) & = \beta \sum\limits_{j \in S_R} q_{rj} \int V_{n+1}(x, p', j) Q_{n,r}(dp'|p).
\end{align*}

From Proposition \ref{prop:concavity} and the fact that $V_N(\cdot, p, r) = h_N(\cdot, p)$ is concave and $U_N(\cdot, p, r)$ is constant, the next corollary can be deduced immediately.

\begin{corollary}\label{cor:concavity}
Both functions $V_n(\cdot, p,r)$ and $U_n(\cdot, p,r)$ are concave for all $p \in \cP, r \in S_R$ and $n=1,\ldots,N$.
\end{corollary}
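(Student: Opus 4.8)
The plan is to prove Corollary \ref{cor:concavity} by backward induction on $n$, using Proposition \ref{prop:concavity} as the inductive engine. The two claims—concavity of $V_n(\cdot,p,r)$ and of $U_n(\cdot,p,r)$—are intertwined, so I would prove them together in a single induction.

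First I would establish the base case at $n=N$. By definition $V_N(\cdot,p,r)=h_N(\cdot,p)$, which is assumed to be concave in $x$, and $U_N(\cdot,p,r)\equiv 0$ is trivially concave (it is constant). This anchors the induction. Next, for the inductive step, I would assume that $V_{n+1}(\cdot,p',j)$ is concave in its first argument for every $p'\in\cP$ and $j\in S_R$. The key observation is that the continuation value
\begin{align*}
U_n(x,p,r)=\beta\sum_{j\in S_R}q_{rj}\int V_{n+1}(x,p',j)\,Q_{n,r}(dp'|p)
\end{align*}
is a nonnegative linear combination (with weights $\beta q_{rj}\ge 0$) of integrals of the concave functions $x\mapsto V_{n+1}(x,p',j)$. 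Since concavity is preserved under nonnegative linear combinations and under integration against a positive measure, $U_n(\cdot,p,r)$ is concave. Having concavity of $V_{n+1}(\cdot,p',j)$, I would then invoke Proposition \ref{prop:concavity} directly: it states precisely that if $x\mapsto v(x,p,r)$ is concave then $x\mapsto (T_nv)(x,p,r)$ is concave. Applying this with $v=V_{n+1}$ and using the Bellman recursion $V_n=(T_nV_{n+1})$ from \eqref{eq:bellman} yields concavity of $V_n(\cdot,p,r)$, closing the induction.

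The only subtlety worth flagging is the measurability and integrability needed to justify interchanging the convex-combination inequality with the integral when arguing that $U_n$ inherits concavity; these are guaranteed because $V_{n+1}\in\bS_s^+$ (the bounding-function condition in the Remark ensures the integrals are finite and well defined), so the integral of a concave integrand is again concave without any pathology. I do not anticipate a genuine obstacle here, since Proposition \ref{prop:concavity} already absorbs the hard part—verifying that $\alpha a_1+(1-\alpha)a_2$ remains admissible in $D_n(\alpha x_1+(1-\alpha)x_2)$ and that the reward $h(p,\cdot)$ is concave. Consequently the corollary follows essentially immediately, and my proof would simply make the two-line induction explicit, noting that the preservation of concavity under the integral operator defining $U_n$ is the one place requiring a brief justification.
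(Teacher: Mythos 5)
Your proof is correct and follows exactly the route the paper intends: the paper deduces the corollary ``immediately'' from Proposition \ref{prop:concavity} together with the concavity of $V_N(\cdot,p,r)=h_N(\cdot,p)$ and the constancy of $U_N$, which is precisely your backward induction. Your added remark that $U_n$ inherits concavity as a nonnegative mixture/integral of concave functions is the small step the paper leaves implicit, and it is handled correctly.
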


\begin{theorem}[Structure of the optimal policy]\label{thm:optimalpolicy}
For each $n\in\{0,\ldots ,N-1 \}$ there exist $\ul b_n(p,r)$ and $\ol b_n(p,r)$ where $\ul b_n(p,r) \le \ol b_n(p,r)$, such that the optimal policy $\pi^\star =(f_1^\star,\ldots,f_{N-1}^\star)$ is given by
\begin{align}\label{equation:optimalpolicy}
f^\star_n(x,p,r) = \begin{cases}
\min(\ul b_n(p,r) -x, i_n^\ma(x)), & b^\mi \le x < \ul b_n(p,r),\\
0 , & \ul b_n(p,r) \le x \le \ol b_n(p,r),\\
\max(\ol b_n(p,r) -x, i_n^\mi(x)), & \ol b_n(p,r) < x \le b^\ma.
\end{cases}
\end{align}
\end{theorem}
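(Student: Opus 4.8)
The plan is to rewrite the Bellman equation \eqref{eq:bellman} in terms of the \emph{target level} $y:=x+a$ and the continuation value $U_n$. Since $(L_nV_{n+1})(x,p,r,a)=h(p,a)+U_n(x+a,p,r)$, maximizing over $a\in D_n(x)$ is equivalent to maximizing
\[
J(y):=U_n(y,p,r)+h(p,y-x)
\]
over the admissible target interval $y\in[\,x+\max(b^\mi-x,i_n^\mi(x)),\,x+\min(b^\ma-x,i_n^\ma(x))\,]$. The first observation I would record is that $J$ is concave in $y$: by Corollary~\ref{cor:concavity} the map $y\mapsto U_n(y,p,r)$ is concave, while $a\mapsto h(p,a)$ is concave because its slope drops from $-e(p)$ to $-k(p)$ as $a$ crosses $0$ (here the assumption $k(p)\ge e(p)$ is essential), so $y\mapsto h(p,y-x)$ is concave too. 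Thus the whole problem reduces to maximizing a concave function whose kink sits exactly at the current level $y=x$.

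Next I would introduce the two candidate targets. Let $\ul b_n(p,r)$ be a maximizer of $y\mapsto U_n(y,p,r)-k(p)y$ and $\ol b_n(p,r)$ a maximizer of $y\mapsto U_n(y,p,r)-e(p)y$ over $[b^\mi,b^\ma]$; both exist since each map is concave on a compact interval. These are precisely the levels one would steer to if only injection (unit cost $k(p)$) respectively only withdrawal (unit gain $e(p)$) were available. The key ordering $\ul b_n(p,r)\le\ol b_n(p,r)$ follows from a standard comparative-statics argument: adding the two optimality inequalities
\[
U_n(\ul b_n)-k(p)\ul b_n\ge U_n(\ol b_n)-k(p)\ol b_n,\qquad U_n(\ol b_n)-e(p)\ol b_n\ge U_n(\ul b_n)-e(p)\ul b_n
\]
gives $(k(p)-e(p))(\ol b_n-\ul b_n)\ge0$, and since $k(p)\ge e(p)$ the maximizers may be chosen so that $\ul b_n\le\ol b_n$.

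With these in hand the three-region structure comes from comparing the two branches of $J$ on either side of the kink. On the injection side ($y>x$) one has $J(y)=(U_n(y,p,r)-k(p)y)+k(p)x$, concave with peak at $\ul b_n$; on the withdrawal side ($y<x$) one has $J(y)=(U_n(y,p,r)-e(p)y)+e(p)x$, with peak at $\ol b_n$. If $x<\ul b_n(p,r)$, the injection branch is increasing on $(x,\ul b_n]$ so moving up is profitable, while the withdrawal branch is increasing on $[b^\mi,x]$ (since $x\le\ul b_n\le\ol b_n$), so moving down is not; projecting the unconstrained optimum $\ul b_n$ onto the action interval yields the target $\min(\ul b_n,x+i_n^\ma(x))$, i.e.\ the action $\min(\ul b_n-x,i_n^\ma(x))$. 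The case $x>\ol b_n(p,r)$ is symmetric and yields $\max(\ol b_n-x,i_n^\mi(x))$. In the middle region $\ul b_n\le x\le\ol b_n$ both branches point back toward $x$ (the injection branch is already past its peak and the withdrawal branch has not yet reached its peak), so doing nothing is optimal.

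Finally I would verify that the capped targets always lie in $D_n(x)$ --- routine, using $b^\mi\le\ul b_n\le\ol b_n\le b^\ma$, $i_n^\ma\ge0$ and $i_n^\mi\le0$ --- and settle measurability of $(p,r)\mapsto\ul b_n(p,r),\ol b_n(p,r)$, and hence of $f_n^\star$, by a measurable-selection argument. I expect the middle-region claim to be the only genuinely delicate point: because $J$ has a kink at $x$ and $U_n$ need not be differentiable, one must argue with one-sided slopes (equivalently the superdifferential of $U_n$) rather than a single stationarity condition, and it is exactly the inequality $\ul b_n\le\ol b_n$ that guarantees the ``do nothing'' interval is non-empty and that the three regions tile $[b^\mi,b^\ma]$ without overlap.
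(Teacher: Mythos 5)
Your proposal is correct and follows essentially the same route as the paper's proof: pass to the target level $z=x+a$, split the relaxed problem at the kink into the injection subproblem with maximizer $\ul b_n(p,r)$ and the withdrawal subproblem with maximizer $\ol b_n(p,r)$, order the two bounds, resolve the three cases via concavity of $U_n(\cdot,p,r)$, and finally project onto the rate-constrained interval. Your ordering argument (adding the two optimality inequalities to get $(k(p)-e(p))(\ol b_n-\ul b_n)\ge 0$) is a slightly cleaner version of the paper's ``decreasing difference'' argument, and your remarks on one-sided slopes at the kink and on measurable selection are refinements the paper leaves implicit.
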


\begin{proof}
The proof is similar to the proof of Theorem 1 in \cite{Sec10}. Let $n\in \{0,\ldots , N-1\}$ be fixed but arbitrary.
At first we loosen the restrictions of the optimization problem at time $n$ and maximize over all $z= a+x$ $\in \left[b^\mi,b^\ma\right]$. Thus we consider the following problem
\begin{align*}
\max\limits_{z\in\left[b^\mi,b^\ma\right] } h_n(z-x,p,r) + U_n(z,p,r),
\end{align*}
where $p \in \cP$ and $r\in S_R$. Dividing the optimization problem into the case $z\ge x$ and the case $z \le x$ yields
\begin{align}
\label{1}
\max \left(\max\limits_{z\in\left[x,b^\ma\right] }  U_n(z,p,r)-k(p) z+ k(p) x,
\max\limits_{z\in\left[b^\mi,x\right] }  U_n(z,p,r)-e(p) z+ e(p) x\right).
\end{align}
Let us denote the two subproblems by
\begin{align}
\label{2}
\max\limits_{z\in\left[x,b^\ma\right] }  U_n(z,p,r)-k(p) z+ k(p) x
\end{align}
\begin{align}
\label{3}
\max\limits_{z\in\left[b^\mi,x\right] }  U_n(z,p,r)-e(p) z+ e(p) x.\end{align}
Moreover, let  $\ul b_n(p,r)$ be the maximizer of \eqref{2} when $x=b^\mi$ and $\ol b_n(p,r)$ the maximizer of \eqref{3} when $x=b^\ma$.

In the following we will prove that $\ul b_n(p,r)  \le \ol b_n(p,r).$ For fixed $p,r$ let us denote
\begin{eqnarray*}
  g_1(z) &=& U_n(z,p,r) - k(p) z  \\
  g_2(z) &=& U_n(z,p,r) - e(p) z.
\end{eqnarray*}
The function $g$ with
$$g(z,p,r):= g_1(z)-g_2(z) = {(e(p) -k(p))} z$$
is decreasing in $z$ since $e(p) -k(p)<0$. Thus $g_1(z)$ is equal to $g_2(z)$ plus a decreasing function in $z$. Therefore the maximizer of $g_1$  is smaller or equal to the maximizer of $g_2$ for each $p,r$ fixed but arbitrary. Hence the assertion follows. \\

Consider the optimization problem \eqref{1} for $x\in[b^\mi, \ul b_n(p,r))$. Obviously, $\ul b_n(p,r) $ maximizes \eqref{2}. Now consider \eqref{3}. For each admissible $z$ we have $z\le x< \ul b_n(p,r) \le \ol b_n(p,r)$. Since $\ol b_n(p,r)$ maximizes the optimization problem \eqref{3} when $x=b^\ma$, but is not admissible, we know by the concavity of the function that $x$ maximizes the problem. Thus we get that $x$ maximizes \eqref{3}.\\
Now we need to determine whether $\ul b_n(p,r)$ or $x$ maximizes \eqref{1}.
Since $x$ is admissible for \eqref{2} and $\ul b_n(p,r)$ maximizes this problem we get
$$U_n(\ul b_n(p,r),p,r) - k(p) \ul b_n(p,r) + k(p) x \ge U_n(x,p,r) - k(p) x + k(p) x = U_n(x,p,r).$$
Inserting the maximizer $x$ in \eqref{3}, we get $U_n(x,p,r)$. Thus $\eqref{1}=\max(\eqref{2},\eqref{3})$ is maximized by $\ul b_n(p,r)$.\\

Considering the optimization for $x\in (\ol b_n(p,r),b^\ma]$ it can be shown by similar argumentation that $\ol b_n(p,r) $ maximizes \eqref{1}.\\

For $x\in [\ul b_n(p,r), \ol b_n(p,r) ]$ we have that $x$ maximizes both \eqref{2} and \eqref{3} and thus \eqref{1}. This is a similar argumentation as in the first case considering the optimization of \eqref{3}.\\

Now we restrict the admissible $z$ in \eqref{1}, such that $i_n^\mi(x) + x \le z\le i_n^\ma(x) + x$. Then we consider the problem
\begin{align}
\label{1stern}
\max\limits_{z\in \left[(i_n^\mi(x)+x) \vee b^\mi,\  b^\ma \wedge (i_n^\ma(x)+x)\right]}h_n(z-x,p,r) + U_n(z,p,r)
\end{align}
and get for the maximizers
\begin{align*}
z^\star(x,p,r) = \begin{cases}
\ul b_n(p,r) \wedge (i_n^\ma(x) + x), & x\in[b^\mi, \ul b_n(p,r))\\
x , & x\in[\ul b_n(p,r), \ol b_n(p,r) ], \\
\ol b_n(p,r) \vee (i_n^\mi(x) + x), & x\in (\ol b_n(p,r) , b^\ma].
\end{cases}
\end{align*}
Transferring this result to the original problem $(a= z-x)$ yields the statement.
\end{proof}

\begin{remark}
Under further assumptions it can be shown, that $\ul b_n(\cdot, r)$ and $\ol b_n(\cdot, r)$ are monotonically decreasing, see \cite{Riess14}.
\end{remark}

The interesting fact about the structure of the optimal policy is, that it is not always optimal to withdraw or inject as much gas as possible or do nothing. Instead by using the optimal policy you are trying to reach the calculated bound in the next step or do nothing depending on how much gas is in the storage. Figure \ref{fig:strategy} illustrates the structure of the optimal policy. The choice of the action depends on the amount of gas in the storage and the optimal policy ``divides'' the interval $[b^\mi, b^\ma]$ into three parts. Assuming the amount of gas in the storage $x_1$ lies between $b^\mi$ and $\ul b_n(p,r)$, then it is optimal to inject gas, either as much as possible ($|i_n^\ma(x_1)|$), if you do not or just reach the bound $\ul b_n(p,r)$, or just the amount that is missing to reach $\ul b_n(p,r)$ - although you could inject more. A similar situation arises when the amount of gas in the storage $x_2$ is above $\ol b_n(p,r)$.

\begin{figure}[htbp]
\begin{tikzpicture}
\begin{scope}[>=latex]
\draw[-] (0,0)-- node[above, midway]{}(3,0) -- node[above, midway]{}(6,0)--node[above, midway]{}(9,0);
\node at(1.5,0.6){\small{inject}};
\node at(4.5,0.6){\small{nothing}};
\node at(7.5,0.6){\small{withdraw}};

\draw[-] (0, 0.5)--(0,-0.5) node[below]{\textcolor{black}{\small{$b^\mi$}}};
\draw[-] (9, 0.5)--(9,-0.5) node[below]{\textcolor{black}{\small{$b^\ma$}}};

\draw[-] (3, 0.5)--(3,-0.5) node[below]{\small{$\ul b_n(p,r)$}};
\draw[-] (6, 0.5)--(6,-0.5) node[below]{\small{$\ol b_n(p,r)$}};

\draw[-, color=black] (0.5, 0.1)--(0.5,-0.1) node[below]{\textcolor{black}{\scriptsize{$x_1$}}};

\draw[->, color=black] (0.5,0).. controls (0.75,0.5) and (1.25,0.5).. (1.5,0);
\draw [color=black,
	decoration={
        brace,
        mirror,
        raise=0cm
    },
    decorate
] (0.5,-0.4) -- (1.5,-0.4) node[below,midway]{\tiny{\textcolor{black}{$|i^\ma(x_1)|$}}};

\draw[-, color=black] (2.5, 0.1)--(2.5,-0.1) node[below]{\textcolor{black}{\scriptsize{$x_1'$}}};


\draw[->, color=black, dashed] (2.5,0).. controls (2.75,0.5) .. (3,0);

\draw[-, color=black] (8.5, 0.1)--(8.5,-0.1) node[below]{\textcolor{black}{\scriptsize{$x_2$}}};
\draw[<-, color=black] (7.5,0).. controls (7.75,0.5) and (8.25,0.5).. (8.5,0);
\draw [color=black,
	decoration={
        brace,
        mirror,
        raise=0cm
    },
    decorate
] (7.5,-0.4) -- (8.5,-0.4) node[below,midway]{\tiny{\textcolor{black}{$|i^\mi(x_2)|$}}};

\draw[-, color=black] (6.5, 0.1)--(6.5,-0.1) node[below]{\textcolor{black}{\scriptsize{$x_2'$}}};

\draw[->, color=black, dashed] (6.5,0).. controls (6.25,0.5) .. (6,0);
\end{scope}
\end{tikzpicture}

\caption{Structure of the optimal policy}\label{fig:strategy}
\end{figure}
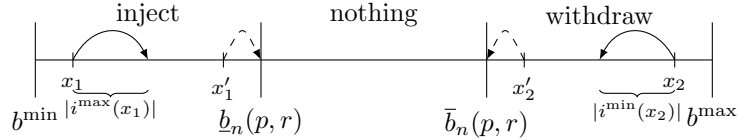

\subsection{Structure considering some special cases}\label{ssec:specialcases}
After considering the structure of the optimal policy in general and realizing that it is not of bang-bang type, we focus now on a special case, where we will see a bang-bang structure. Assume that $b^\mi-x \le i_n^\mi(x) \le i_n^\ma(x) \le b^\ma-x$ for each $n  \in \{0,1,\ldots ,N\}$ and $x\in[b^\mi, b^\ma]$. Thus the restriction set simplifies to
\begin{align}
D_n(x) = \{a\in A| i_n^\mi(x) \le a \le i_n^\ma(x)\}.
\end{align}
We further assume that $i_n^\mi$ and $i_n^\ma$ are linearly decreasing functions. Notice that this special case includes the case $ i_n^\mi(x)=b^\mi-x$ and $i_n^\ma(x)= b^\ma-x$, where it is possible to empty and fill the storage within one time step, sometimes called the ``fast storage''.

Additionally the terminal reward function is supposed to be a linear function of $x$.

\begin{theorem}
In the previously described special case, the value function $V_n$ is a linear function in $x$, that is
\begin{align}
V_n(x,p,r) = c_n(p,r) x+ d_n(p,r) \text{ for each } n \in \{0,1,\ldots ,N\}, p \in \cP, r \in S_R
\end{align}
where $c_n$ and $d_n$ are suitable functions. Further an optimal policy is of the form
\begin{align}
f^\star_n(x,p,r) = \begin{cases} i_n^\mi(x), & \gamma^1_n(p,r) < 0, \\
0, & \gamma^2_n(p,r) \le 0 \le \gamma^1_n(p,r),\\
i_n^\ma(x) , & \gamma^2_n(p,r) >0.\end{cases}
\end{align}
with suitable functions $\gamma^2_n(p,r) \le \gamma^1_n(p,r)$.
\end{theorem}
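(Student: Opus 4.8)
The plan is to prove both claims simultaneously by backward induction on $n$, treating the linearity of $V_n$ as the inductive hypothesis from which the bang-bang structure falls out at each step. The base case $n=N$ is immediate: by assumption $h_N(x,p)$ is linear in $x$, so $V_N(x,p,r)=h_N(x,p)=:c_N(p,r)x+d_N(p,r)$ (with coefficients not depending on $r$). Assume now that $V_{n+1}(x,p,r)=c_{n+1}(p,r)x+d_{n+1}(p,r)$.

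The first and most mechanical step is to check that linearity is inherited by the continuation value. Since integration and the finite sum over $j\in S_R$ are linear in $x$, inserting the inductive hypothesis into the definition of $U_n$ gives
\begin{align*}
U_n(x,p,r)=\tilde c_n(p,r)\,x+\tilde d_n(p,r),
\end{align*}
where $\tilde c_n(p,r)=\beta\sum_{j\in S_R}q_{rj}\int c_{n+1}(p',j)\,Q_{n,r}(dp'|p)$ and $\tilde d_n$ is defined analogously. Plugging this into $V_n(x,p,r)=\sup_{a\in D_n(x)}\{h(p,a)+U_n(x+a,p,r)\}$, the term $\tilde c_n(p,r)x+\tilde d_n(p,r)$ factors out of the supremum, so that the maximization in $a$ decouples from $x$ and reduces to
\begin{align*}
\max_{a\in[i_n^\mi(x),\,i_n^\ma(x)]}\ \vphi(a),\qquad \vphi(a):=h(p,a)+\tilde c_n(p,r)\,a.
\end{align*}

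The heart of the argument is the elementary analysis of $\vphi$. By the definition of $h$ and of $k,e$, the map $\vphi$ is piecewise linear with a single kink at $a=0$: its slope is $\gamma^2_n(p,r):=\tilde c_n(p,r)-k(p)$ for $a>0$ and $\gamma^1_n(p,r):=\tilde c_n(p,r)-e(p)$ for $a<0$. The assumption $k(p)\ge e(p)$ gives $\gamma^2_n(p,r)\le\gamma^1_n(p,r)$, hence the left slope dominates the right slope and $\vphi$ is concave (consistent with Corollary~\ref{cor:concavity}). Since $i_n^\mi(x)\le 0\le i_n^\ma(x)$, the admissible interval straddles the kink, and the maximizer of a concave piecewise-linear function over such an interval is determined purely by the signs of the two slopes: if $\gamma^1_n<0$ both slopes are negative and the maximum sits at the left endpoint $i_n^\mi(x)$; if $\gamma^2_n\le 0\le\gamma^1_n$ it sits at the kink $a=0$; if $\gamma^2_n>0$ both slopes are positive and it sits at the right endpoint $i_n^\ma(x)$. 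This is exactly the claimed three-region policy, with $\gamma^2_n\le\gamma^1_n$ as required.

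The crucial point, and the one deserving the most care, is that the three alternatives are selected by the signs of $\gamma^1_n(p,r),\gamma^2_n(p,r)$, which do \emph{not} depend on $x$. Hence for each fixed $(p,r)$ one is in a single regime for all storage levels, so the optimal action is $i_n^\mi(x)$, $0$, or $i_n^\ma(x)$ uniformly in $x$. Substituting this maximizer back and using that $i_n^\mi$ and $i_n^\ma$ are \emph{affine} in $x$, the value $V_n(x,p,r)=\vphi(a^\star)+\tilde c_n(p,r)x+\tilde d_n(p,r)$ is a sum of affine functions of $x$ and is therefore again linear, which closes the induction and furnishes explicit formulas for $c_n,d_n$ in each regime. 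The only place where genuine structure (rather than bookkeeping) enters is this regime-independence of the case distinction together with the affinity of the rate bounds: were $i_n^\mi,i_n^\ma$ merely convex/concave, $\vphi(a^\star)$ would cease to be affine in $x$ and linearity would break, which is precisely why this special case is needed.
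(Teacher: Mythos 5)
Your proof is correct and follows essentially the same route as the paper: backward induction with linearity of the continuation value, after which the maximization reduces to a piecewise-linear concave function of $a$ with kink at $0$ whose maximizer is read off from the signs of $\gamma^1_n=\tilde c_n-e(p)$ and $\gamma^2_n=\tilde c_n-k(p)$ (the paper phrases this by splitting the supremum over $a\le 0$ and $a\ge 0$ and noting the two indicator events $\{\gamma^1_n<0\}$ and $\{\gamma^2_n>0\}$ cannot occur simultaneously since $k(p)\ge e(p)$, which is exactly your slope comparison). Your explicit emphasis that the case selection is independent of $x$ and that affinity of $i_n^{\mi},i_n^{\ma}$ is what preserves linearity matches the paper's closing observation.
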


\begin{proof}
First notice that for functions $c_n$ and $d_n$ there exist functions $g_{n-1}$ and $l_{n-1}$ such that
\begin{align*}
\E_{n-1,p,r}[c_n(P_n, R_n)]= g_{n-1}(p,r)
\intertext{ and }
\E_{n-1,p,r}[d_n(P_n, R_n)]= l_{n-1}(p,r).
\end{align*}
We use backward induction. For $t=N$ we get by the linearity of $h_N$ in $x$
$$V_N(x,p,r) = h_N(x,p) = c_N(p,r) x+ d_N(p,r).$$
Assuming that $V_n(x,p,r) = c_n(p,r) x+ d_n(p,r)$ and
going backwards in time from $n$ to $n-1$ we get
\begin{align*}
V_{n-1}(x,p,r) &= \max && \hspace{-12mm} \left(\sup\limits_{i_{n-1}^\mi (x) \le a \le 0} \{-e(p) a + \E [V_n(x+a,P_n, R_n)|P_{n-1}=p, R_{n-1}=r]\}, \right.\\
& && \hspace{-9mm} \left. \sup\limits_{0\le a \le i_{n-1}^\ma(x)} \{-k(p) a + \E [V_n(x+a,P_n, R_n)| P_{n-1}=p, R_{n-1}=r]\}\right)\\
&= \max &&\hspace{-12mm}  \left(\sup\limits_{i_{n-1}^\mi (x) \le a \le 0} \{-e(p) a +g_{n-1}(p,r)(x+a) +  l_{n-1}(p,r)\}\,, \right. \\
& && \hspace{-9mm} \sup\limits_{0\le a \le i_{n-1}^\ma(x)} \{-k(p) a + g_{n-1}(p,r)(x+a) +  l_{n-1}(p,r)\}\Bigg)\\
 &= xg_{n-1}(p,r) &&\hspace{-4mm} + l_{n-1}(p,r) + \max  \left(  (g_{n-1}(p,r) -e(p)) i_{n-1}^\mi(x) \ind_{\{g_{n-1}(p,r)-e(p)< 0\}}, \right. \\
 & && \hspace{28mm} \left.  (g_{n-1}(p,r)-k(p)) i_{n-1}^\ma(x) \ind_{\{g_{n-1}(p,r)-k(p) >0\}}\right)\\
 & = x g_{n-1}(p,r) &&\hspace{-4mm} + l_{n-1}(p,r) + (g_{n-1}(p,r) -e(p)) i_{n-1}^\mi(x) \ind_{\{g_{n-1}(p,r)-e(p) <0\}} \\
 &&&\hspace{-4mm} + (g_{n-1}(p,r) -k(p)) i_{n-1}^\ma(x) \ind_{\{g_{n-1}(p,r)-k(p) >0\}}\\
 &=:xc_{n-1}(p,r)&& \hspace{-4mm} + d_{n-1}(p,r).
\end{align*}
The last equality holds since $i_n^\mi$ and $i_n^\ma$ are linear functions of $x$. The last but one equality holds since the two indicator functions are never equal to $1$ at the same time. This is true since in that case we would have $k(p) < g_{n-1}(p,r) < e(p)$ which is a contradiction to $k(p) \ge e(p)$.
Further we see that
$$f^\star_{n-1}(x,p,r) = \begin{cases}
i_{n-1}^\mi(x), & \gamma^1_{n-1}(p,r) < 0 ,\\
0, & \gamma^2_{n-1}(p,r) \le 0 \le \gamma^1_{n-1}(p,r)\\
i_{n-1}^\ma(x), & \gamma^2_{n-1}(p,r)>0
\end{cases}$$
where $\gamma^1_{n-1}(p,r) = g_{n-1}(p,r) -e(p)$ and $\gamma^2_{n-1}(p,r) = g_{n-1}(p,r) -k(p)$.\\
Since $e(p) \le k(p)$ we have
\begin{align*}
\gamma^1_{n-1}(p,r) = g_{n-1}(p,r) -e(p) \ge g_{n-1}(p,r) -k(p) = \gamma^2_{n-1}(p,r).
\end{align*}

\end{proof}

\begin{remark}
From the proof we can see that in the case $k(p) = e(p)$ we get $\gamma^1(p,r) = \gamma^2(p,r)$. Thus ``doing nothing'' is optimal only if $\gamma^1(p,r)= \gamma^2(p,r)=0$. Also note that for monotonicity properties of $\gamma^1_{n}(p,r), \gamma^2_{n}(p,r)$ in the price $p$, further assumptions on the price process (like in \cite{Sec10}) are necessary.
\end{remark}

\section{Algorithms}\label{sec:algorithms}
In the following we set the discount factor $\beta=1$ for simplification.

\subsection{General structure of the algorithms}
The general structure of any algorithm to compute the value of the gas storage facility is given by a Backward Induction Algorithm that is based on the Bellman equation (\ref{eq:bellman}) and Theorem \ref{thm:optimalpolicy}.\\

Given a grid in the gas volume level of the storage $x$ and the gas price $p$ we first compute the value function at time $N$ for all grid points in $x$ and $(p,r)$. Going backward in time we then compute the policy bounds $\ul b$ and $\ol b$ by solving the corresponding optimization problems for all $(p,r)$ and finally compute the maximizer and value function at that stage for all $x$ and $(p,r)$. Figure \ref{fig:generalstructure} illustrates the general structure. The arrows indicate \texttt{for}-loops.\\

\begin{figure}[htbp]
\centering
\begin{tikzpicture}[xscale=1.2, yscale=1.2]

\shadedraw[top color=black!12, bottom color= black!12, draw= black] (0.3,-0.3) rectangle +(9.4,-1.4);
\draw[->, color= black, thick] (9.8,-1.6) -- (10.7,-1.6)-- (10.7, -0.4)node[midway, right] {\small{$\forall \ (p,r)$}}  -- (9.8,-0.4) ;

\shadedraw[top color= black!19, bottom color=  black!19, draw= black] (0.6,-0.6) rectangle +(8.8,-0.8);

\draw[->, color= black, thick] (9.5,-1.3) -- (10.1,-1.3)-- (10.1, -0.7)node[midway, right] {\small{$\forall \ x$}} -- (9.5,-0.7);

\node[right] at (0.7, -1) {\small{Compute $V_N(x,p,r) = h_N(x,p)$}};

\draw[->, thick] (3.6, -1.7) -- (3.6, -1.95);

\shadedraw[top color= black!5, bottom color=  black!5, draw= black] (0,-2) rectangle +(10,-6.5);

\draw[->, color= black, thick] (10.1,-8.4) -- (11.8,-8.4)-- (11.8, -2.1)node[rotate=270] at(12.1,-5.25){\small{$\text{for } n=(N-1), \ldots,  0$}}  -- (10.1,-2.1) ;

\shadedraw[top color= black!12, bottom color=  black!12, draw= black] (0.3,-2.3) rectangle +(9.4,-5.9);

\draw[->, color= black, thick] (9.8,-8.1) -- (10.7,-8.1)-- (10.7, -2.4)node[midway, right] {\small{$\forall \ (p,r)$}}  -- (9.8,-2.4) ;

\node[right] at (0.4, -2.7) {\small{Compute $\ul b_n(p,r), \ \ol b_n(p,r)$ by:}};
\node at (3.6, -3.4) {\small{$\max\limits_{x\in [b^\mi, b^\ma]} U_n(x,p,r) - k(p) x$} };
\node at (3.6, -4.1) {\small{$\max\limits_{x\in [b^\mi, b^\ma]} U_n(x,p,r) - e(p) x$} };

\shadedraw[top color= black!19, bottom color= black!19, draw= black] (0.6,-4.6) rectangle +(8.8,-3.3);

\draw[->, color=black, thick] (9.5,-7.8) -- (10.1,-7.8)-- (10.1, -4.7)node[midway, right] {\small{$\forall \ x$}} -- (9.5,-4.7);

\node[right] at (0.7, -4.9) {\small{Compute}};

\node[right] at (0.8, -5.9){\small{$ f^\star_n(x,p,r) = \begin{cases}
\min(\ul b_n(p,r) -x, i_n^\ma(x)), & b^\mi \le x < \ul b_n(p,r),\\
0 , & \ul b_n(p,r) \le x \le \ol b_n(p,r),\\
\max(\ol b_n(p,r) -x, i_n^\mi(x)), & \ol b_n(p,r) < x \le b^\ma.
\end{cases} $}};

\node[right] at (0.7, -6.9) {\small{and}};

\node[right] at (0.8, -7.5) {\small{$V_n(x,p,r) = h(p, f_n^\star) + U_n(x+ f_n^\star, p,r)$}};

\end{tikzpicture}

\caption{General Structure of any algorithm using the optimal policy}\label{fig:generalstructure}
\end{figure}
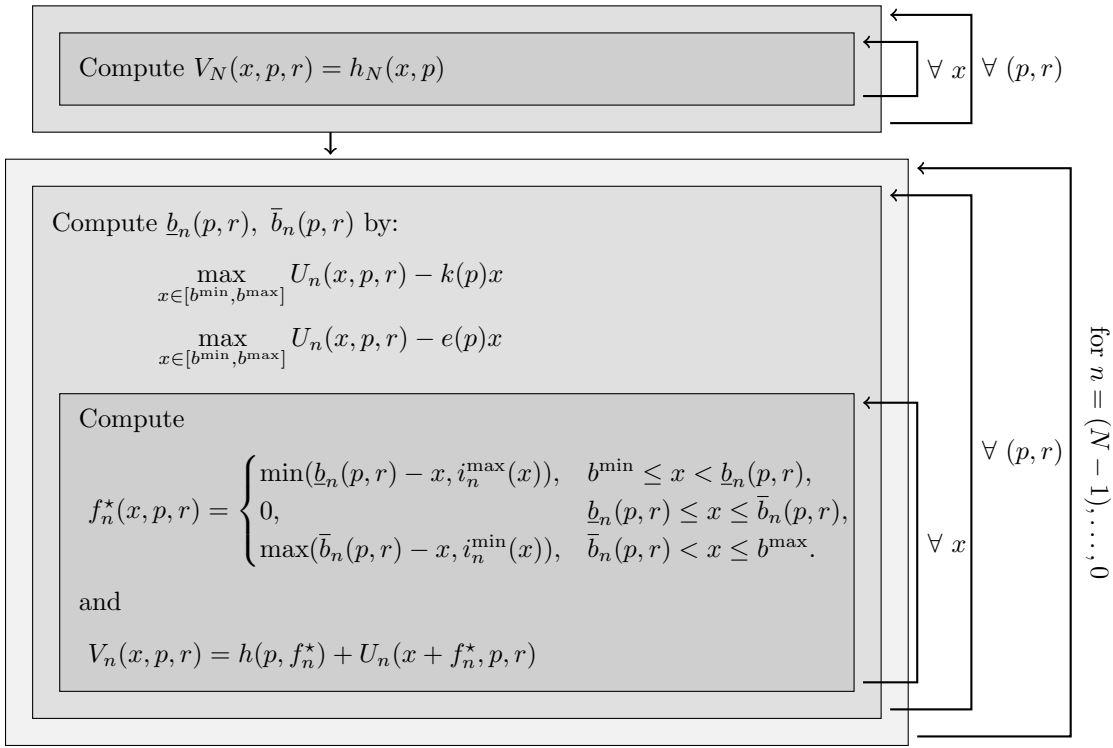

Looking at the basic structure of the algorithm it becomes clear, that the key question is how to compute or estimate $U_n(x,p,r)= \E_{n,p,r}[V_{n+1}(x, P_{n+1}, R_{n+1})]$. In the end every other parameter is given or can be calculated based on $U_n$.

\begin{remark}[Potential of using the optimal policy]
Comparing the basic structure of the algorithm illustrated in figure \ref{fig:generalstructure} to the structure directly using the Bellman Equation \ref{eq:bellman} we see that instead of solving optimization problems for all $(x,p,r)$ it is enough to solve two optimization problems for all $(p,r)$. Thus we get a significant reduction of the number of optimization problems which have to be solved.
This and the fact that the optimization problems involve an optimization over the whole interval $[b^\mi, b^\ma]$ instead of $D_n$ which depends on the current volume level yields a good potential for parallelization in implementing any algorithm using the structure of the optimal policy.
\end{remark}

\subsection{Specification of the price process}
In order to show possible ways to compute or estimate $U_n$ and thus develop algorithms we introduce a specific model of the price process. Later some remarks will indicate to which extend the model can be generalized such that the presented algorithm still works. We will consider a continuous-time model which we will discretize to use our MDP approach.\\

We assume that the log-prices follow a regime-switching variation of the one-factor Schwartz' model, see \cite{Schwartz97}, with a possible switch in the time varying mean. It has been shown by \cite{chen2010implications} that regime-switching models fit real data very well. In what follows suppose that $R_t := R_{\lfloor t\rfloor}$ is the continuous process constructed from the discrete-time Markov chain $(R_n)$. Thus, the log-price-process $(X_t)_{t\ge 0}$ satisfies the stochastic differential equation
\begin{align*}
dX_t = \alpha(\mu_{R_t}(t)-X_t) dt + \sigma dB_t,
\end{align*}
where $\alpha \in \Rp$ is the mean-reversion factor, $\mu_r(t)$ is the time dependent mean when the Markov Chain $(R_t)$ is in state $r$, $\sigma \in \Rp$ denotes the volatility and $(B_t)_{t\ge 0}$ is a Brownian Motion.
The price process $(P_t)_{t\ge 0}$ itself is given by $P_t=e^{X_t}$. In the description of the two algorithms below, we assume that the state space of $(R_t)$ consists of two different values only.

\begin{remark}
Although we assume that the price process can only switch between two regimes, the algorithms below can easily be extended to the case of more regimes, as well as faster switches.
\end{remark}

\begin{example}\label{ex:parameter-price}[Choice of the parameters]
The following parameters were chosen for the numerical examples. Based on a parameter estimation by \cite{BBK08}, who fitted a similar model without regime switching to NBP data. We assume that
\begin{align*}
\mu_1(t) & = a_0 + a_1 t + a_2 \cos(2\pi(t-a_3)/250), \\
\mu_2(t) &= a_0 - a_1 t + a_2 \cos(2\pi(t-a_3)/250), \\
\end{align*}
where $a_0 = 2.69$, $a_1=0.0007$, $a_2 = -0.234$ and $a_3= 118.1$. The mean reversion factor $\alpha$ is assumed to be  $0.073$ and the volatility to be $\sigma=0.072$. Further the transition matrix of the Markov chain is given by
$$Q= \begin{pmatrix}
0.9 & 0.1\\
0.5 &0.5
\end{pmatrix}.$$
\end{example}

Having specified the price model we now introduce two methods to get a grid in the price domain and compute $U_n$.

\subsection{A ``Multinomial-Tree'' Algorithm}
The ``Multinomial-Tree'' method is based on the idea of approximating the price process by a recombining Binomial tree and thus calculate the conditional expectation $U_n$ ``exactly'' on that grid. The authors of \cite{Nel90} developed a method for approximating diffusion models, which we will use to get an approximating recombining tree for the log-price process. The advantage of a recombining tree is that the number of nodes grows linearly and not exponentially. For this purpose the time interval $[0,N]$ is divided into equidistant parts of length $\Delta t$. Let for each $y\in \R$, $j = 0,1,\ldots, \frac N \dt$, $r=1,2$
\begin{align*}
&{}^\up\qdt^r(y, j) \text{ where } 0 \le {}^\up\qdt^r(y, j)  \le 1,\\
&{}^\down\Ydt^r(y, j), {}^\up\Ydt^r( y, j ) \text{ where } - \infty < {}^\down\Ydt^r(y, j)\le {}^\up\Ydt^r( y, j )< \infty,
\end{align*}
be functions on $\R\times\N_0$ that indicate the probability of the ``Up''-movement at gridpoint $\dt\cdot j$ starting at $y$, as well as the values that are reach by an ``Up''- and a``Down''-movement respectively
The process $(\Ydt^r(t))_{t\ge 0}, \ r=1,2$ is given by
\begin{align*}
&\Ydt^r (0) = X_0, \\
&\Ydt^r (t) = \Ydt^r(j ), \quad  j\dt \le t <(j+1) \dt,\\
&\P[\Ydt^r(j+1 ) = {}^\up\Ydt^r(\Ydt^r(j), j)| j\dt,\Ydt^r(j) ] = {}^\up\qdt^r(\Ydt^r(j), j),\\
&\P[\Ydt^r(j+1 ) = {}^\down\Ydt^r(\Ydt^r(j), j)| j\dt,\Ydt^r(j) ] = {}^\down\qdt^r(\Ydt^r(j), j).
\end{align*}
This process, or the corresponding functions, are now constructed such that we get a recombining structure and the local drift and local second moment converge to the drift and volatility of the price process as $\dt$ tends to $0$.\\

Using the rather intuitive choice of ${}^{\up/\down}\Ydt^r(y,j \dt) = y \pm \sqrt{\dt} \sigma$ yields a recombining structure and by \cite{Nel90} the ``underlying'' price tree. Note that ${}^{\up/\down}\Ydt^r$ only depends on $\sigma$ which is not influenced by a regime switch, thus the grid in the price domain for the algorithms is the same in each regime. Figure \ref{fig:pricetree} illustrates the underlying structure of the Binomial tree process.

\begin{figure}[htbp]
\centering
\begin{tikzpicture}[xscale=1, yscale=1]
\node[left] (h) at (0,0) {\small{$y$}};
\node (hu) at (2,1) {\small{$y+ \sigma\sqrt{\dt}$}};
\node (hd) at (2,-1) {\small{$y- \sigma\sqrt{\dt}$}};
\node[right](hdd)  at (4,-2) {\small{$y- 2\sigma\sqrt{\dt} $}};
\node[right] (hud) at (4,0) {\small{$y$}};
\node[right](huu)  at (4,2) {\small{$y+2\sigma \sqrt{\dt}$}};

\node[right](hn)  at (8,4) {\small{$y+n\sigma \sqrt{\dt}$}};
\node[right](hn-2)  at (8,3) {\small{$y+(n-2)\sigma \sqrt{\dt}$}};

\node[right](h-n)  at (8,-4) {\small{$y-n\sigma \sqrt{\dt}$}};
\node[right](h-n-2)  at (8,-3) {\small{$y-(n-2) \sigma\sqrt{\dt}$}};

\draw[->] (h)--(hu) ;
\draw[->]  (h)--(hd) ;
\draw [->] (hu)--(huu);
\draw [->] (hu)--(hud);
\draw [->] (hd)--(hdd);
\draw [->] (hd)--(hud);

\draw [-, dotted] (8.2,2.8)--(8.2,-2.8);

\draw [-, dotted] (huu)--(7,3.5) -- (8,4);
\draw [-, dotted] (7,3.5) -- (8,3);

\draw [-, dotted] (hdd)--(7,-3.5) -- (8,-4);
\draw [-, dotted] (7,-3.5) -- (8,-3);

\draw [-, dotted] (hud)--(5,0.5);
\draw [-, dotted] (hud)--(5,-0.5);

\end{tikzpicture}
\caption{Structure of the ``underlying'' price tree}\label{fig:pricetree}
\end{figure}
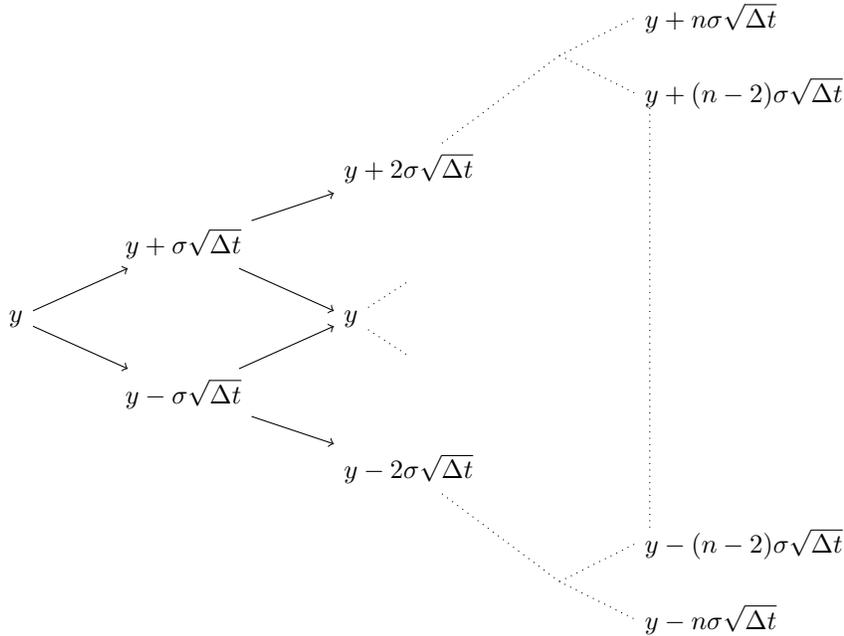

Based on \cite{Nel90} the probabilities are chosen to be

\begin{align*}
{}^\up\qdt^r(y, j)& = \max\left( 0, \min \left(1, \widetilde{{}^\up\qdt^r}(y, j) \right)\right), \quad {}^\down\qdt^r(y, j)= 1- {}^\up\qdt^r(y, j).
\end{align*}
where

\begin{align*}
\widetilde{{}^\up\qdt^r}(y, j)&  = \frac{\dt \cdot \alpha(\mu_r(t) - y) + y -{}^\down\Ydt(y, j) }{{}^\up\Ydt(y, j) - {}^\down\Ydt(y, j) }. \\
\end{align*}

Under some further assumption it has been shown in \cite{Nel90} that the process $(\Ydt^r)_{t\ge 0}$ converges weakly to to the corresponding log-price process as $\dt$ tends to $0$. Thus its exponential converges weakly to the price process.\\

In order to use this construction in an algorithm to compute
$$U_n(x,p,r) = \sum\limits_{l \in S_R} q_{rl} \E_{n,p,r}[V_{n+1}(x, P_{n+1}, l)]$$
we pick some fixed value $m\in \N$ and choose $\dt$ such that $m \cdot \dt= 1$.
Approximating the price process with the method from above we know that the random variable $\widetilde P_{n+1}$ given $\widetilde P_n=p$ can take $m+1$ different values denoted by $p_{n+1}^{(p,1)}, \ldots,p_{n+1}^{(p,m+1)}, $ with positive probability. Figure \ref{fig:pircetreeinalgorithm} illustrates this for $m=1, 2,3$.
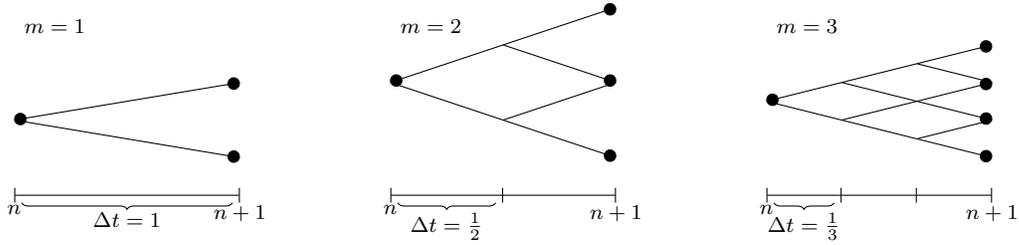
\begin{figure}[htbp]
\centering
\begin{tikzpicture}[xscale=1, yscale=1]

\draw[|-|] (0,0) node[below]{\scriptsize{$n$}}--(3,0)node[below]{\scriptsize{$n+1$}} ;

\draw[|-|] (5,0)node[below]{\scriptsize{$n$}}--(6.5,0);
\draw[-|] (6.5,0)--(8,0)node[below]{\scriptsize{$n+1$}} ;

\draw[|-|] (10,0)node[below]{\scriptsize{$n$}}--(11,0);
\draw[-|] (11,0)--(12,0);
\draw[-|] (12,0) --(13,0)node[below]{\scriptsize{$n+1$}} ;

\draw [color=black,
	decoration={
        brace,
        mirror,
        raise=0cm
    },
    decorate
] (0.1,-0.1) -- (2.9,-0.1) node[below,midway]{\tiny{\textcolor{black}{\scriptsize{$\dt= 1$}}}};

\draw [color=black,
	decoration={
        brace,
        mirror,
        raise=0cm
    },
    decorate
] (5.1,-0.1) -- (6.4,-0.1) node[below,midway]{\tiny{\textcolor{black}{\scriptsize{$\dt=\frac 12$}}}};

\draw [color=black,
	decoration={
        brace,
        mirror,
        raise=0cm
    },
    decorate
] (10.1,-0.1) -- (10.9,-0.1) node[below,midway]{\tiny{\textcolor{black}{\scriptsize{$\dt=\frac 13$}}}};

\draw[*-*] (0,1) -- (3,1.5);
\draw[-*] (0,1) -- (3,0.5);

\draw[*-] (5,1.5) -- (6.5,2);
\draw[-] (5,1.5) -- (6.5,1);
\draw[-*] (6.5,2) -- (8,2.5);
\draw[-*] (6.5,2) -- (8,1.5);
\draw[-*] (6.5,1) -- (8,0.5);
\draw[-] (6.5,1) -- (8,1.5);

\draw[*-] (10,1.25) -- (11,1.5);
\draw[-] (10,1.25) -- (11,1);
\draw[-] (11,1.5) -- (12,1.75);
\draw[-] (11,1.5) -- (12,1.25);
\draw[-] (11,1) -- (12,1.25);
\draw[-] (11,1) -- (12,0.75);
\draw[-*] (12,1.25) -- (13,1.5);
\draw[-*] (12,1.25) -- (13,1);
\draw[-*] (12,0.75) -- (13,0.5);
\draw[-] (12,0.75) -- (13,1);
\draw[-] (12,1.75) -- (13,1.5);
\draw[-*] (12,1.75) -- (13,2);

\node[right] at (0,2.25) {\scriptsize{$m=1$}};

\node[right] at (5,2.25) {\scriptsize{$m=2$}};
\node[right] at (10,2.25) {\scriptsize{$m=3$}};

\end{tikzpicture}
\caption{Structure of the price tree using different values of $\dt$}\label{fig:pircetreeinalgorithm}
\end{figure}

Thus the continuation value $U_n$ on that tree can be calculated by
\begin{align}\label{eq:U-binalgo}
U_n(x,p,r) = \sum\limits_{l\in S_R} q_{rl} \sum\limits_{j = 1}^{m+1} V_{n+1}(x, p_{n+1}^{(p,j)},l) \cdot \P(\widetilde P_{n+1}= p_{n+1}^{(p,j)}| \widetilde P_n=p, R_n=r) .
\end{align}

The necessary probabilities are computed using ${}^{\up/\down}\qdt^r$ from the contruction. For example in the case $m=1$ we simply have $\P(\widetilde P_{n+1}= p_{n+1}^{(p,1)}| \widetilde P_n=p, R_n=r) = {}^\up \qdt^r(\log(p),n)$ and $\P(\widetilde P_{n+1}= p_{n+1}^{(p,2)}| \widetilde P_n=p, R_n=r) = {}^\down \qdt^r(\log(p),n)$. In the case $m=2$ we get
\begin{align*}
\P(\widetilde P_{n+1}= p_{n+1}^{(p,1)}| \widetilde P_n=p, R_n=r) &= {}^\up \qdt^r(\log(p),n) \cdot {}^\up \qdt^r\left(\log(p) + \sigma \sqrt{\frac 12},n+\frac 12\right) ,\\
\P(\widetilde P_{n+1}= p_{n+1}^{(p,2)}| \widetilde P_n=p, R_n=r)&= {}^\up \qdt^r(\log(p),n) \cdot {}^\down \qdt^r\left(\log(p) + \sigma \sqrt{\frac 12},n+\frac 12\right) \\
& + {}^\down \qdt^r(\log(p),n) \cdot {}^\up \qdt^r\left(\log(p) - \sigma \sqrt{\frac 12},n+\frac 12\right),\\
\P(\widetilde P_{n+1}= p_{n+1}^{(p,3)}| \widetilde P_n=p, R_n=r) &= {}^\down \qdt^r(\log(p),n) \cdot {}^\down \qdt^r\left(\log(p) - \sigma \sqrt{\frac 12},n+\frac 12\right) .
\end{align*}

Because of the recombining structure we get at time $n$ at most $1+m\cdot n$ gridpoints in $p$. There are \textit{at most} such many gridpoints since some are reached with probability $0$ and thus do not need to be considered in the calculation.

\begin{remark}[More general price process models]
The construction of a recombining tree works according to \cite{Nel90} for any diffusion model of the form
$$d X_t= \mu(X_t,t) dt + \sigma(X_t,t) dB_t,$$ although the construction of the ``underlying'' price tree is more complicated in some cases.  To add a possible regime switching it is necessary that the switch in regimes does not affect the underlying price tree, but only the probabilities. That means we can allow for a switch in mean, not in the volatility. But as we see, it is possible to allow for a time dependent volatility, for example.
\end{remark}

\begin{remark}[Convergence of Multinomial-Tree Algorithm]
It can be shown that with $m\to \infty$, i.e. the tree is getting finer, the algorithm converges against the true value and the optimal policy converges against the true optimal one. This is essentially due to the fact that according to \cite{Nel90} the tree converges weakly against the true diffusion price process. However, a thorough proof needs a number of technical steps which we will indicate in this remark. In order to do so, let us denote by $V_n^{(m)}$ the value function for the tree price process with $m$ nodes per time step and by $V_n$ the value function for the model with diffusion price process. By $f_n^{(m)}$ and $f_n$ we denote the corresponding optimal decision rules.

First note that the set $D_n(x)$ of admissible actions does not depend on the tree, is compact and the set-valued mapping $x\mapsto D_n(x)$ is continuous in the sense of \cite{BR11} (Definition A.2.1). Thus it follows with Theorem 2.4.10 in \cite{BR11} that $V_n^{(m)}(x,p,r)$ as well as $V_n(x,p,r)$ are continuous in $x$ and $p$ for all $m$. Next it is possible to find for all $n$ a r.v. $Y_n$ which is larger w.r.t. the increasing convex order than the price $P_n^{(m)}$ for all $m$. Finally for all $m,x$ and $r$ we get that $V_n^{(m)}(x,p,r) \le p d_1+d_2$ for positive constants $d_1, d_2$, since the value of the gas storage can be bounded above by a system where we can sell the complete storage at once at every stage.

Then the statement can be shown by induction using Theorem A.1.5 in \cite{BR11}: Since $V_N^{(m)} = h_N$ independent of $m$, we have the stated convergence for $N$. Now suppose $V_{n+1}^{(m)}\to V_{n+1}$ for $m\to\infty$ and the maximizer at stage $n+1$ also converges (in the sense that $Ls D_{n+1}^{*(m)}(x) \subset D_{n+1}^*(x)$, see  Theorem A.1.5 in \cite{BR11}). Then it can be shown that $$ \sum_{j}q_{rj}\int V_{n+1}^{(m)}(x+a,p',j)Q^{(m)}_{n,r}(dp'|p) \to \sum_{j}q_{rj}\int V_{n+1}(x+a,p',j)Q_{n,r}(dp'|p)$$ for $m\to\infty$ and $Ls D_{n}^{*(m)}(x) \subset D_{n}^*(x)$.

\end{remark}

\subsection{A Least-Square Monte Carlo Algorithm}

The second method we present is based on the well known Least-Square Monte Carlo Method of \cite{LS01} that was first used for gas storage valuation by \cite{BdJ08}. We take up that idea and adapt it to the problem with regime switching and moreover use the structure of the optimal policy.
Since the whole method is based on a Monte Carlo simulation of the price process, we assume that there are $M$ paths of the price process simulated. Thus we have $M$ paths of the underlying Markov chain, that indicate the regimes, and $M$ paths of the actual corresponding price.
The idea is based on the fact, that since $U_n(x,\cdot,r)$ is a conditional expectation it is a function of the current price $p$ for the current regime $r$ and for each $x$. Thus we approximate this function by a linear combination of a set of $m$ basis function $\varphi_1^r,\ldots, \varphi_m^r$, i.e. we assume
\begin{align*}
U_n(x,p,r) = \sum\limits_{i=1}^m \beta_i^r \varphi_i^r(p),
\end{align*}
where $\beta_i^r$, $i=1,\ldots, m$ are to be chosen optimal in some sense. For calculating the coefficients $\beta_i^r$ a Least-Square method is used. At time point $n$ we have simulated $M$ prices $p_n^1,\ldots, p_n^M$ and $M$ corresponding regimes $r_n^1, \ldots, r_n^M$. Let $k=1, \ldots,\widetilde M$ be the indices where $r_n^k=1$ and $k=\widetilde M +1, \ldots,M$ those where $r_n^k=2$. For those two sets of indices we now use a Least-Square method to seperately compute $\beta_i^1$ and $\beta_i^2$ respectively. As realizations of the conditional expectation on the grid $p_n^1,\ldots, p_n^M$ we use the already calculated values $V_{n+1}(x,p_{n+1}^1, r_{n+1}^1), \ldots , V_{n+1}(x,p_{n+1}^M, r_{n+1}^M)$. Thus at time step $n$ and volume level $x$ we need to minimize
\begin{align}
\left\Vert
\begin{pmatrix}
V_{n+1}\big(x, p_{n+1}^{1}, r_{n+1}^{1}\big)\\
\vdots\\
V_{n+1}\big(x, p_{n+1}^{\widetilde M},r_{n+1}^{\widetilde M}\big)
\end{pmatrix}
-
\begin{pmatrix}
\vphi_1\big(p_n^{1}\big) &\cdots &\vphi_m \big(p_n^{1}\big)\\
\vdots & & \vdots \\
\vphi_1\big(p_n^{\widetilde M}\big) &\cdots &\vphi_m \big(p_n^{\widetilde M}\big)
\end{pmatrix}
\cdot
\begin{pmatrix}
\beta_1^1\\
\vdots\\
\beta_m^1
\end{pmatrix}
\right\Vert^2
\end{align}
to get $\beta_1^1,\ldots, \beta_m^1$ and
\begin{align}
\left\Vert
\begin{pmatrix}
V_{n+1}\big(x, p_{n+1}^{\widetilde M+1}, r_{n+1}^{\widetilde M+1}\big)\\
\vdots\\
V_{n+1}\big(x, p_{n+1}^{ M},r_{n+1}^{ M}\big)
\end{pmatrix}
-
\begin{pmatrix}
\vphi_1\big(p_n^{\widetilde M+1}\big) &\cdots &\vphi_m \big(p_n^{\widetilde M+1}\big)\\
\vdots & & \vdots \\
\vphi_1\big(p_n^{ M}\big) &\cdots &\vphi_m \big(p_n^{ M}\big)
\end{pmatrix}
\cdot
\begin{pmatrix}
\beta_1^2\\
\vdots\\
\beta_m^2
\end{pmatrix}
\right\Vert^2
\end{align}
to get $\beta_1^2,\ldots, \beta_m^2$, where $\Vert \cdot \Vert$ denotes the Euclidean norm.
In the algorithm we then estimate the conditional expectation by
\begin{align}
\hat U_n(x, p_n^j, r_n^j) = \sum\limits_{i=1}^m \beta_i^{r_n^j}\vphi(p_n^m).
\end{align}

\begin{remark}[Other price models]
This method can be used for any price model with regime switching, that can be simulated.
\end{remark}

\begin{remark}[Convergence of Least-Square Algorithm]
When the number of simulated paths $M$ and the number of chosen basis functions $m$ converges to infinity, then the Least-Square Algorithm converges to the true value. This result has been shown in \cite{clement2002analysis} for the general Least-Square Algorithm.
\end{remark}

\subsection{A Method of building a grid in the storage domain}
\subsubsection{Motivation.}
The most common and intuitive method of choosing a grid in the storage domain is an equidistant grid, i.e. dividing the interval $[b^\mi, b^\ma]$ into a fixed number of parts of the same length. But when we see the structure of the optimal policy we recognize that a lot of times the optimal policy is $a=i^\ma$ or $a=i^\mi$. That means that from state $x$ we often get to state $x+i^\ma(x)$ or $x+ i^\mi(x)$. These states are not necessarily grid points of an equidistant grid, especially when $i^\ma$ and $i^\mi$ are not constant. Thus we get errors in the computation, especially when the equidistant grid is not that fine.

Figure \ref{fig:bounds-equi} shows the policy bounds using the Mulitnomial-Tree Algorithm and the Least-Square Algorithm with an equidistant grid of $530$ grid points. The calculation is based on the example introduced in Section \ref{ssec:numeric}.  As we can see, the policy bounds equal $b^\ma$ or $b^\mi$ most of the time, which makes the decision to be $i^\ma$ or $i^\mi$. This emphasizes the idea of not choosing an equidistant grid, but to use the functions $i^\ma$ and $i^\mi$ and rather include points like $x+i^\ma(x)$ and $x+i^\mi(x)$ instead.

\begin{figure}[htbp]
\vspace*{0.3cm}
\centering
\includegraphics[trim = 15mm 10mm 15mm 10mm,angle={270}, width=0.45 \textwidth]{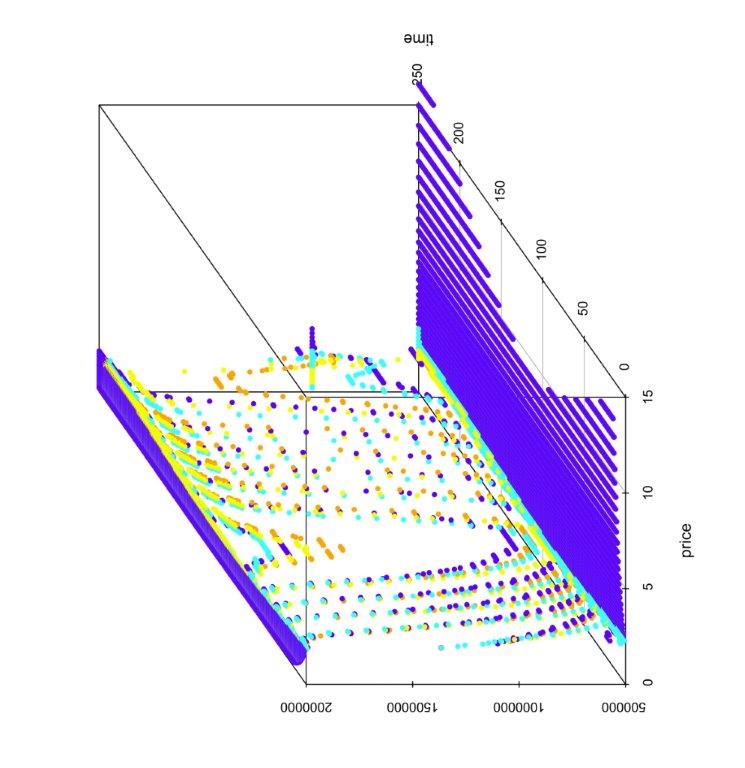}
\hspace{5mm}
\includegraphics[trim = 15mm 10mm 15mm 10mm,angle={270}, width=0.45 \textwidth]{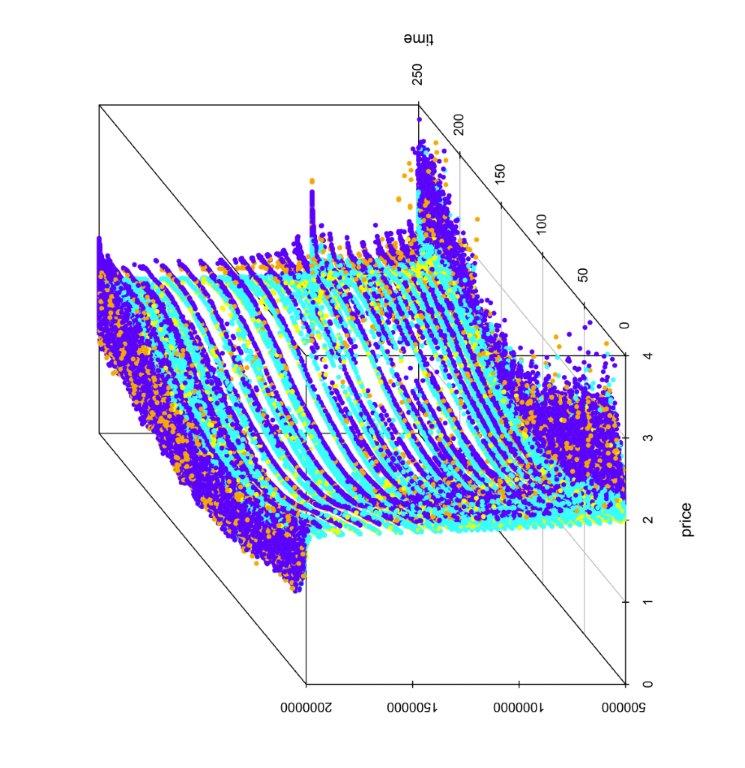}
\caption{Policy bounds $\protect\underline{b}_n(p,r)$ (lightblue, yellow) und $\protect\overline{b}_n(p,r)$ (darkblue, orange) using the Multinomial-Tree Algorithm (left) and the Least-Square Algorithm (right) with an equidistant grid}\label{fig:bounds-equi}
\end{figure}

\subsubsection{Construction and Example}
Clearly the grid we now construct should include the points $b^\ma, b^\mi, x_0$.
Starting from $b^\ma$ we compute $b^\ma + i^\mi(b^\ma), b^\ma + i^\mi(b^\ma)+ i^\mi(b^\ma + i^\mi(b^\ma)),\ldots $ until $b^\mi$ is reached. In the same way we compute starting from $b^\mi$  $b^\mi + i^\ma(b^\mi), b^\mi + i^\ma(b^\mi) + i^\ma(b^\mi + i^\ma(b^\mi)),\ldots$ until $b^\ma$ is reached. As far as $x_0$ does not equal $b^\ma$ nor $b^\mi$ we compute starting from $x_0$ the grid points $x_0 + i^\mi(x_0), x_0 + i^\mi(x_0)+ i^\mi(x_0 + i^\mi(x_0)), \ldots$ and $x_0 + i^\ma(x_0), x_0 + i^\ma(x_0)+ i^\ma(x_0 + i^\ma(x_0)), \ldots$ until the minimal and maximal volume level is reached respectively.

To get an impression of what the grid looks like, Figures \ref{fig:grid-points} and \ref{fig:grid-hist} show grids of minimal lengths $100, \ 500, \ 1500$ in the example we used in Chapter \ref{ssec:numeric}, that is with $b^\mi = 500 000 \text{MMBtu}, \  b^\ma = 2 000 000 \text{MMBtu}, x_0 = 1 000 000 \text{MMBtu}, i^\mi(x) = -70.71 \sqrt{x}, \ i^\ma(x) = -0.032 x+ 68170.$ \\

\begin{figure}[htbp]
\centering
\includegraphics[trim = 15mm 10mm 15mm 10mm,angle={270}, width= \textwidth]{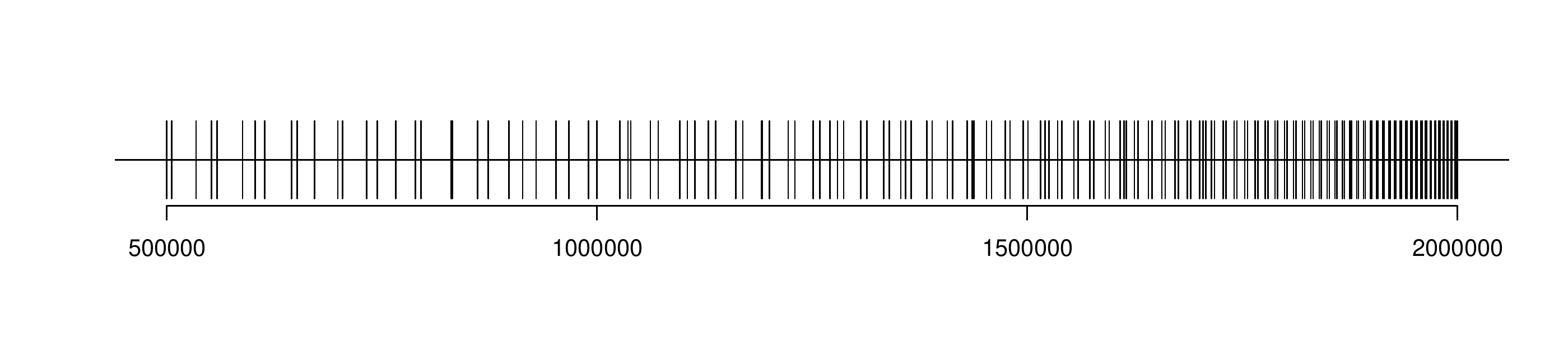}
\includegraphics[trim = 15mm 10mm 15mm 10mm,angle={270}, width= \textwidth]{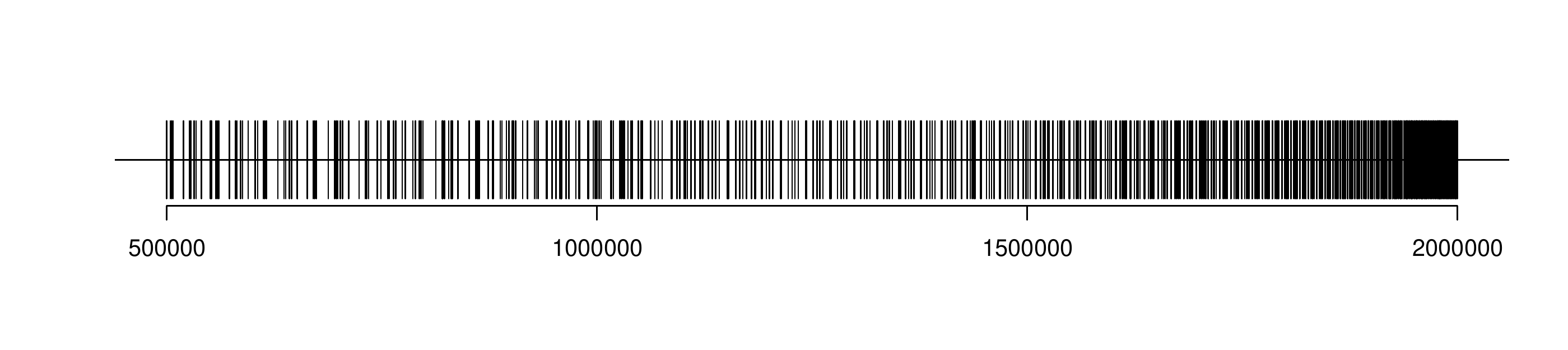}
\includegraphics[trim = 15mm 10mm 15mm 10mm,angle={270}, width= \textwidth]{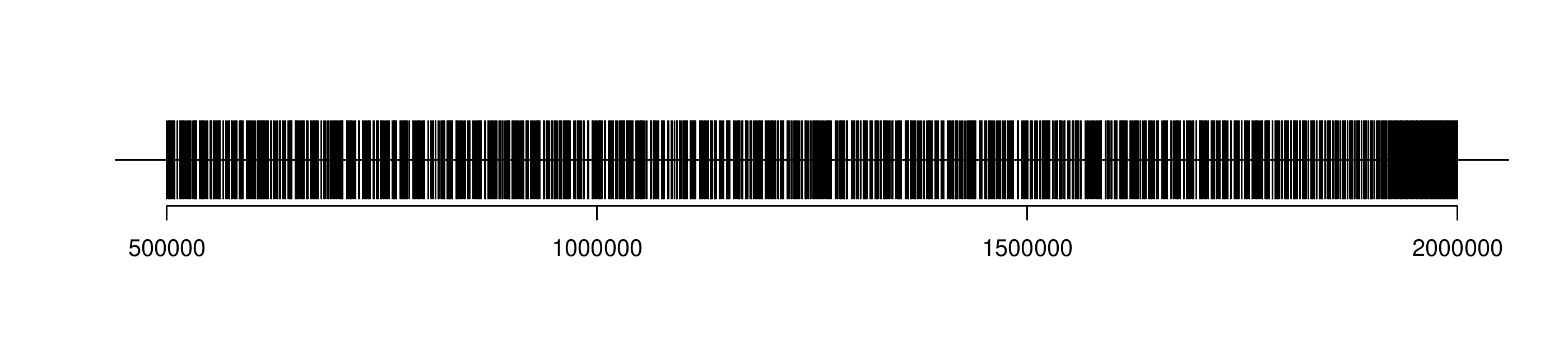}
\caption{Grid points of grids with minimal length $100,500, 1500$ and actual length $137, 530, 1506$}\label{fig:grid-points}
\end{figure}

\begin{figure}[htbp]
\centering
\includegraphics[trim = 15mm 10mm 15mm 10mm,angle={270}, width= 0.3\textwidth]{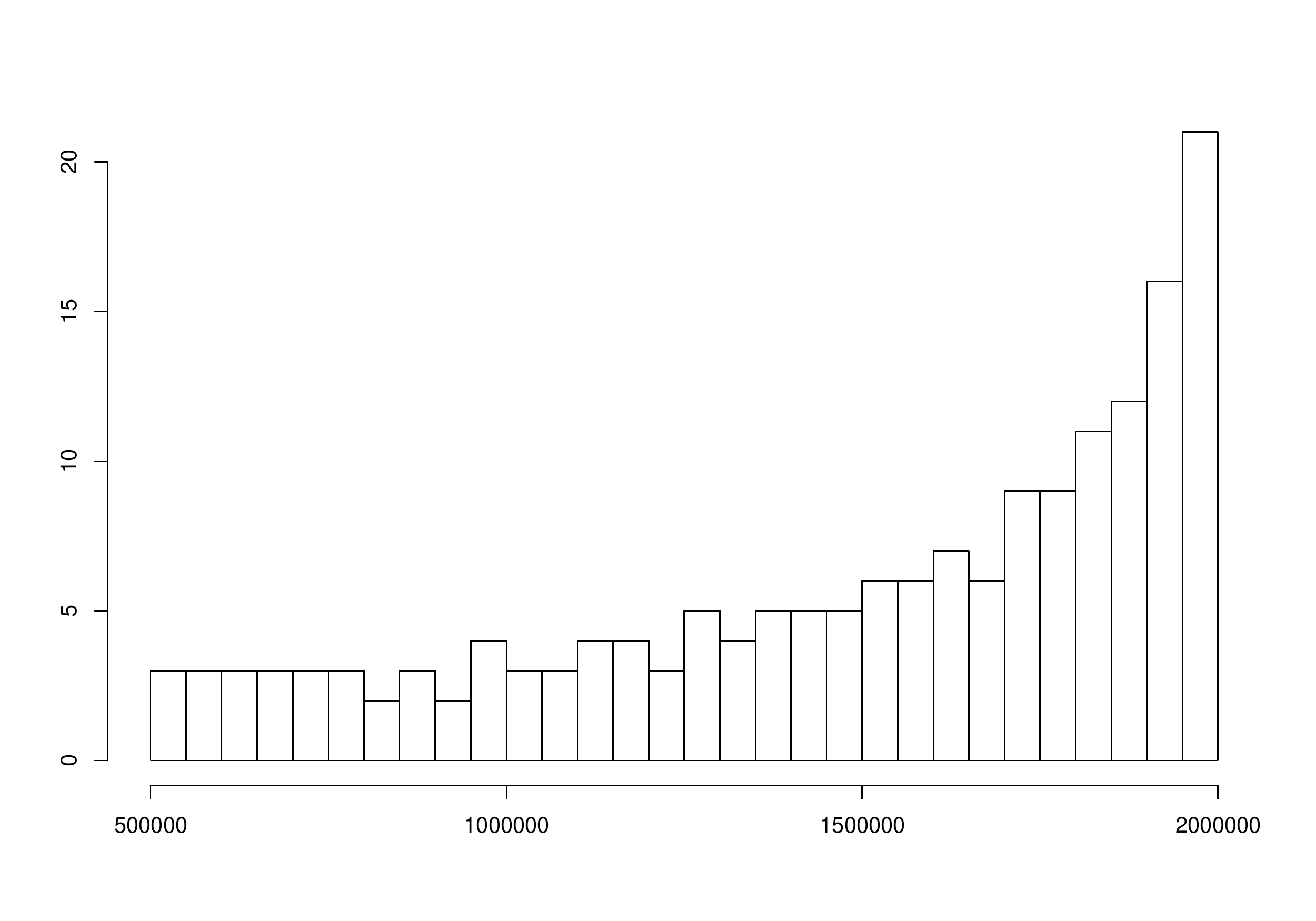}
\includegraphics[trim = 15mm 10mm 15mm 10mm,angle={270}, width=0.3 \textwidth]{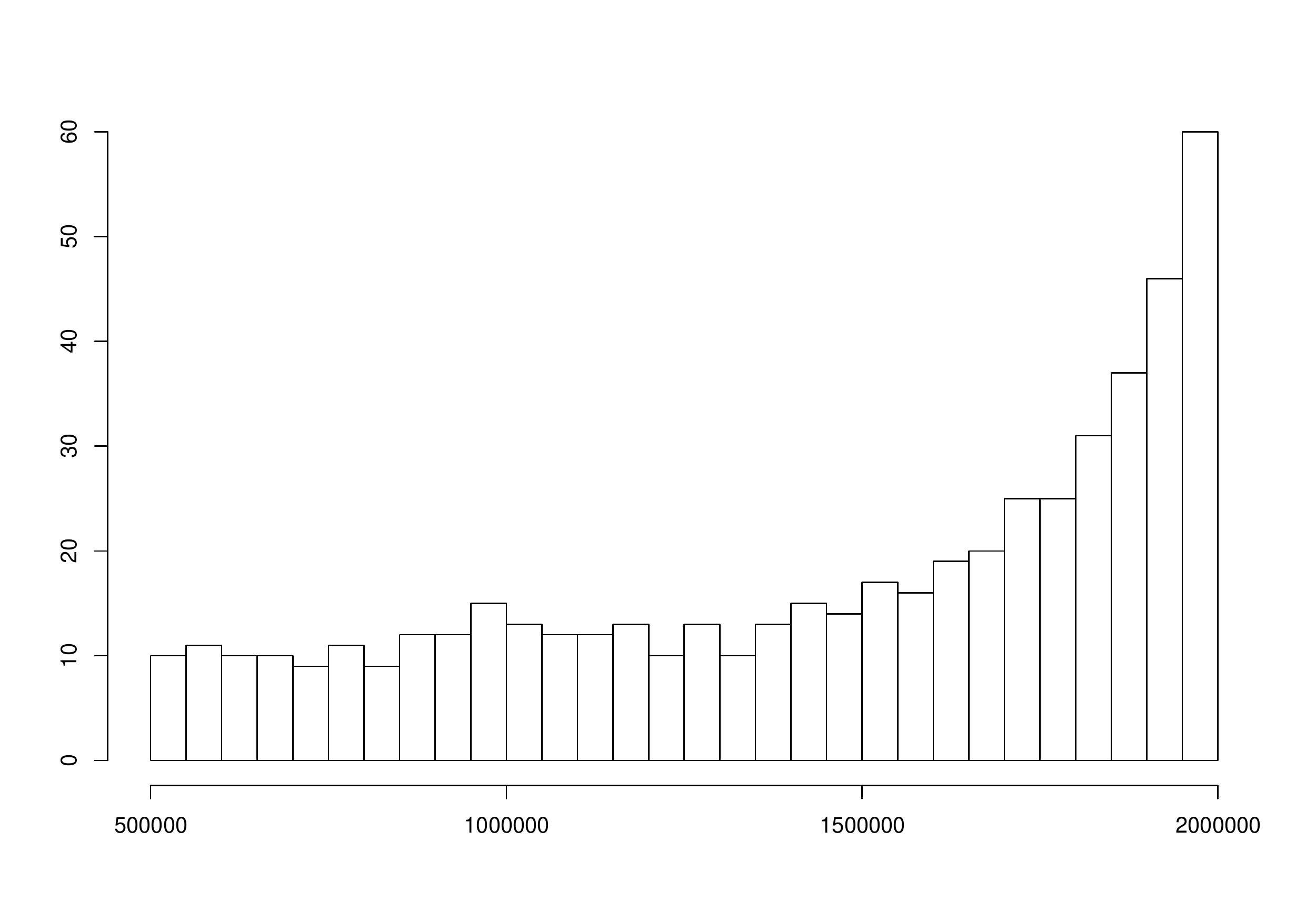}
\includegraphics[trim = 15mm 10mm 15mm 10mm,angle={270}, width= 0.3\textwidth]{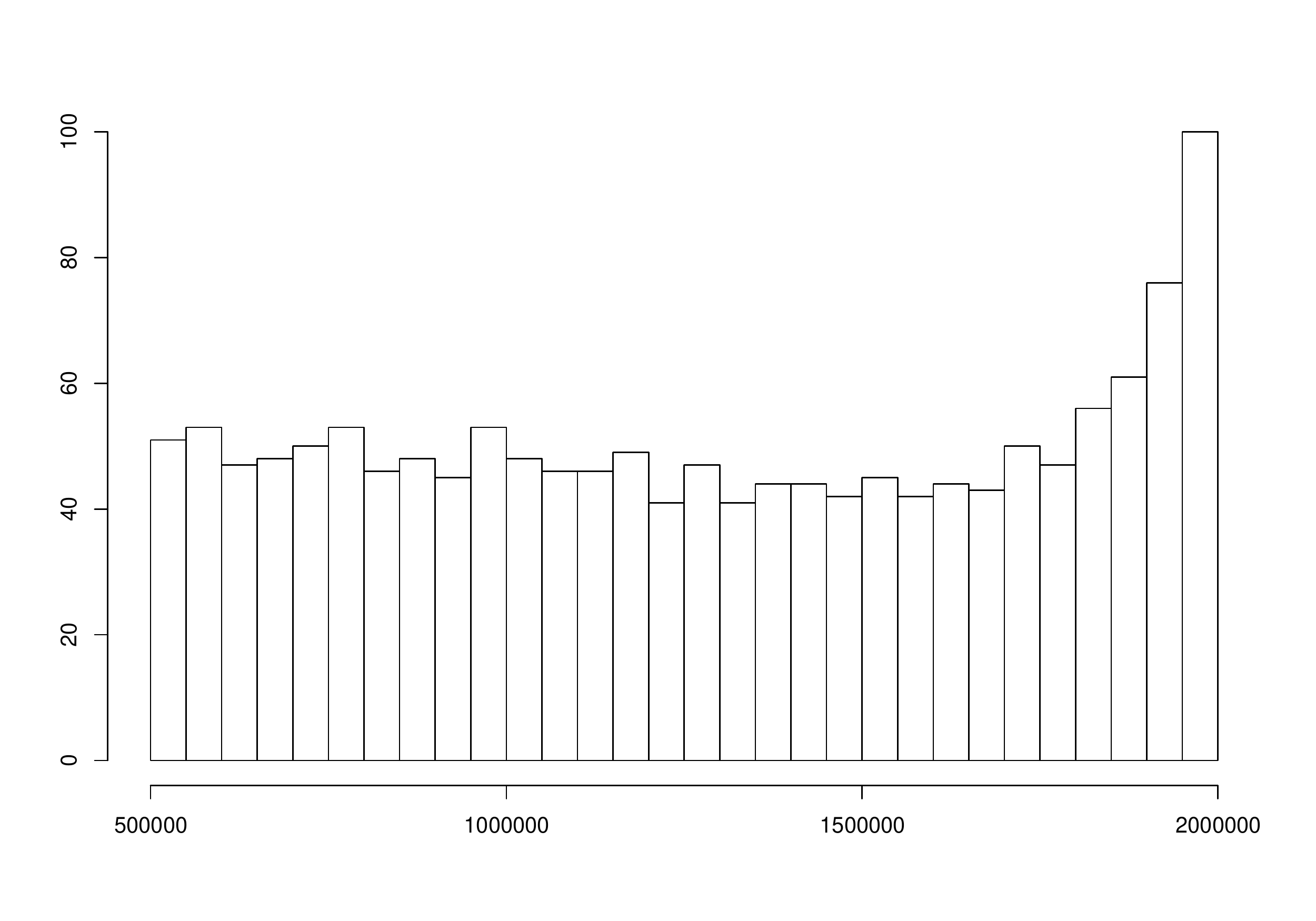}
\caption{Histogramm of grids with minimal length $100,500, 1500$ and actual length $137, 530, 1506$}\label{fig:grid-hist}
\end{figure}

Comparing an equidistant grid to the ``new'' grid we can see in Figure \ref{fig:grid-equi-vs-us} that in both algorithms, the Multinomial-Tree Algorithm and the Least-Square Monte Carlo Algorithm, we get a faster convergence using the ``new grid''. The figure shows the Value of the gas storage using an equidistant grid (star points) and the ``new'' grid (points) using different $\dt$ and different number of simulations $M$ in the example of Chapter \ref{ssec:numeric}.

\begin{figure}[htbp]
\centering
\includegraphics[trim = 15mm 10mm 5mm 10mm,angle={270}, width= 0.45\textwidth]{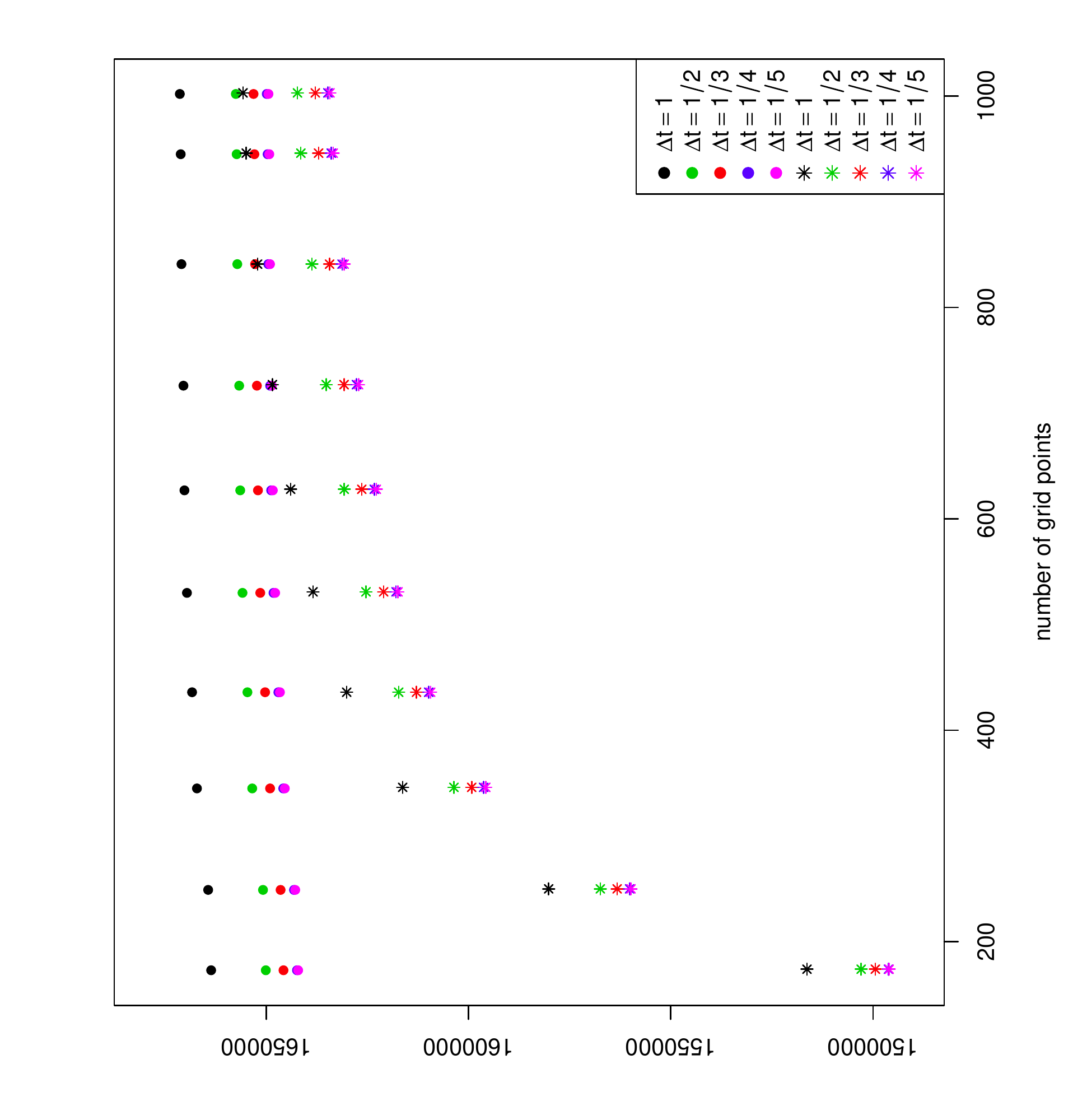}
\hspace{5mm}
\includegraphics[trim = 15mm 10mm 5mm 10mm,angle={270}, width= 0.45\textwidth]{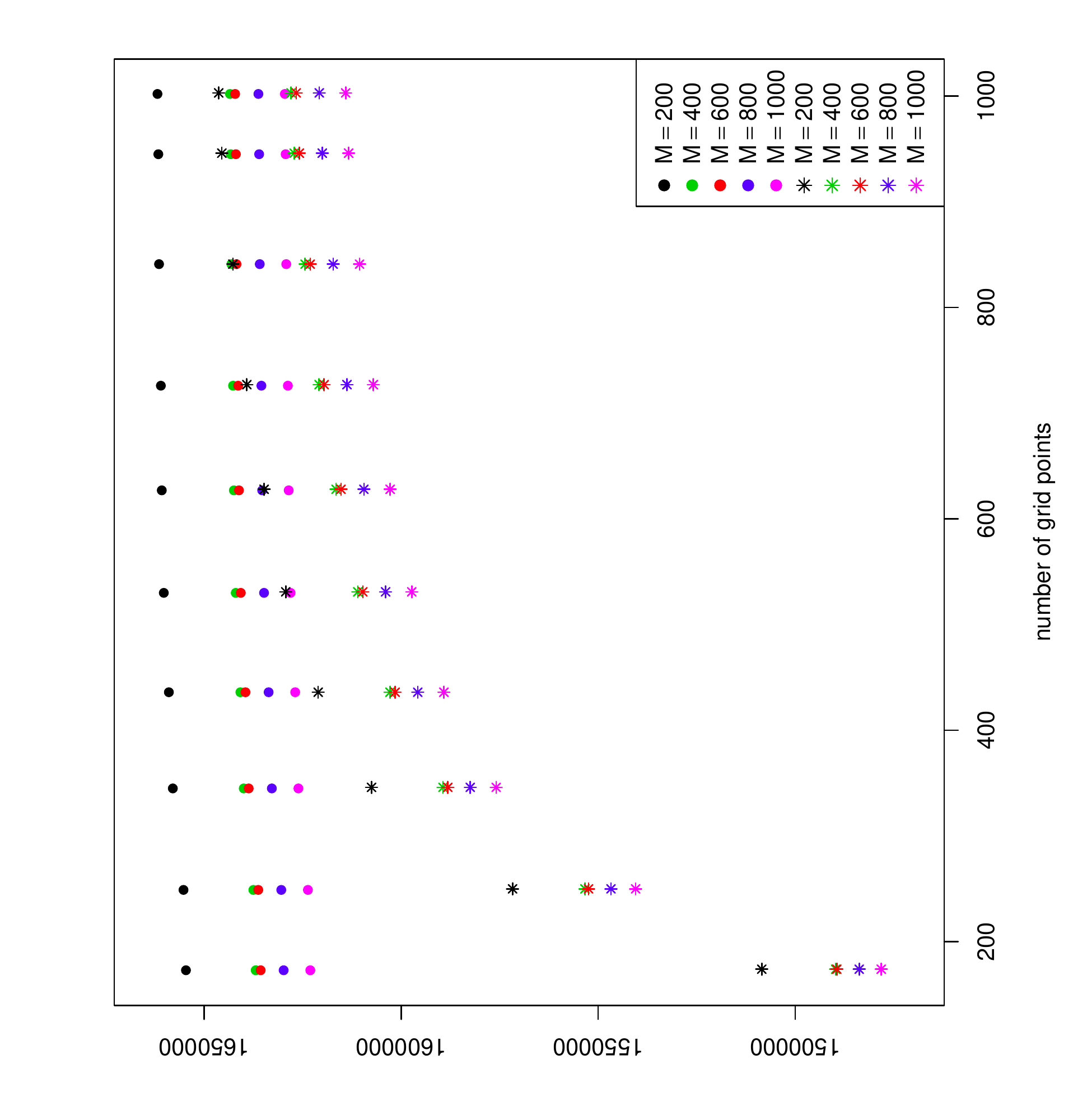}
\caption{Comparison of equidistant grid (star) and new grid (dot): Value of the gas storage using the Multinomial-Tree Algorithm (left) and the Least-Square Algorithm (right)}\label{fig:grid-equi-vs-us}
\end{figure}

\section{Numerical Analysis of the two Algorithms using an example of a gas storage facility}\label{ssec:numeric}
\subsection{An example of a gas storage facility}
First we recall that the parameters of the price process are given in Example \ref{ex:parameter-price}. The prices here are quoted in pence/therm, where $1  \text{pence/therm}=1 \text{\pounds/MMBtu}$, which will be the unit we use in the following computations. As the initial price and regime we choose $p_0=0.1 \cdot e^{\mu_1(0)} \text{\pounds/MMBtu} \approx 1.855 \text{\pounds/MMBtu}$ and $r_0=1$. The ``ask-'' and ``bid-''prices $k$ and $e$ are given by $k(p) = (1+ w_1)+z_1 = 1.01p + 0.02$ and  $k(p) = (1- w_2)-z_2 = 0.995p - 0.02$. In \cite{Sec10} the parameter $z_1$ and $z_2$ are interpreted as additional costs that result e.g. from abrasion of the pump or other technical devices. They refer to \cite{Maragos04}, who states that these costs usually won't exceed 2 cent per MMBtu. On the other hand $z_1$ and $z_2$ can be interpreted as a Bid-Ask-Spread in the market. Gas prices in \cite{dJW03} confirm that the assumption $z_1=z_2= 0.02 \pounds$ is reasonable. The choice of the parameter $w_1$ and $w_2$ can be interpreted as a loss of gas at the pump, that usually is not more than $1\%$, see \cite{Maragos04}. Details can be found in \cite{Sec10}. Another possible interpretation is to include transaction costs.

A realistic example of a gas storage contract that is based on the Stratton Ridge salt cavern in Texas can be found in \cite{thompson2009natural}. Referring to  that salt cavern we choose
$$b^\mi = 500 000 \text{ MMBtu},\quad b^\ma = 2 000 000 \text{ MMBtu}, \quad x_0 = 1 000 000 \text{ MMBtu}. $$
We choose further
$$i^\mi(x) = -70.71 \sqrt{x} , \quad i^\ma(x) = -0.032 x + 68170.$$
Thus it takes approximately 20 days to ``empty'' the storage and 78 days to ``fill'' the storage.

We choose a duration of $1$ year that corresponds to $N=250$ trading days starting from February, since the prices were fitted starting from that time. The terminal reward function is given by
$$h_N(x,p) =\begin{cases}
e(p) (x-x_0), & \text{ if } x>x_0 ,\\
0, & \text{ if } x=x_0,\\
k(p) (x-x_0)& \text{ if } x<x_0.
\end{cases}$$

\subsection{Further computations  and comparison of the two algorithms}
Before we look at further computations we investigate the influence of some of the parameters of the algorithms.
Figure \ref{fig:grid} shows the value of the gas storage using the Multinomial-Tree Algorithm (left) and the Least-Square Algorithm right with different values of $\dt$ and $M$ respectively and with different number of grid points. We see a little kink in all computation at about $530$ grid points in the grid. This is the number we use
for the grid size in the following.

\begin{figure}[htbp]
\centering
\includegraphics[trim = 20mm 10mm 5mm 10mm,angle={270}, width= 0.4\textwidth]{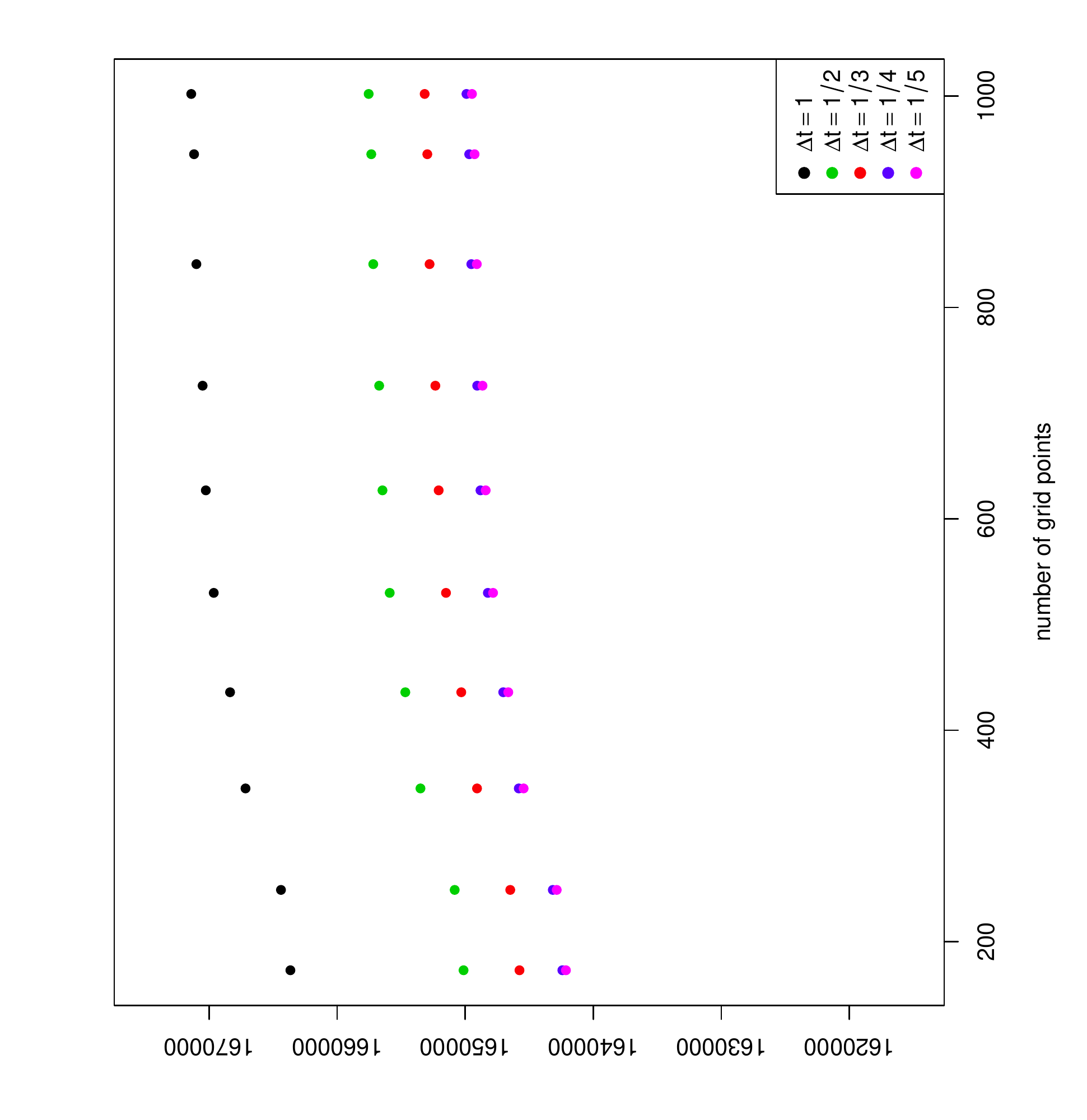}
\hspace{5mm}
\includegraphics[trim = 20mm 10mm 5mm 10mm,angle={270}, width= 0.4\textwidth]{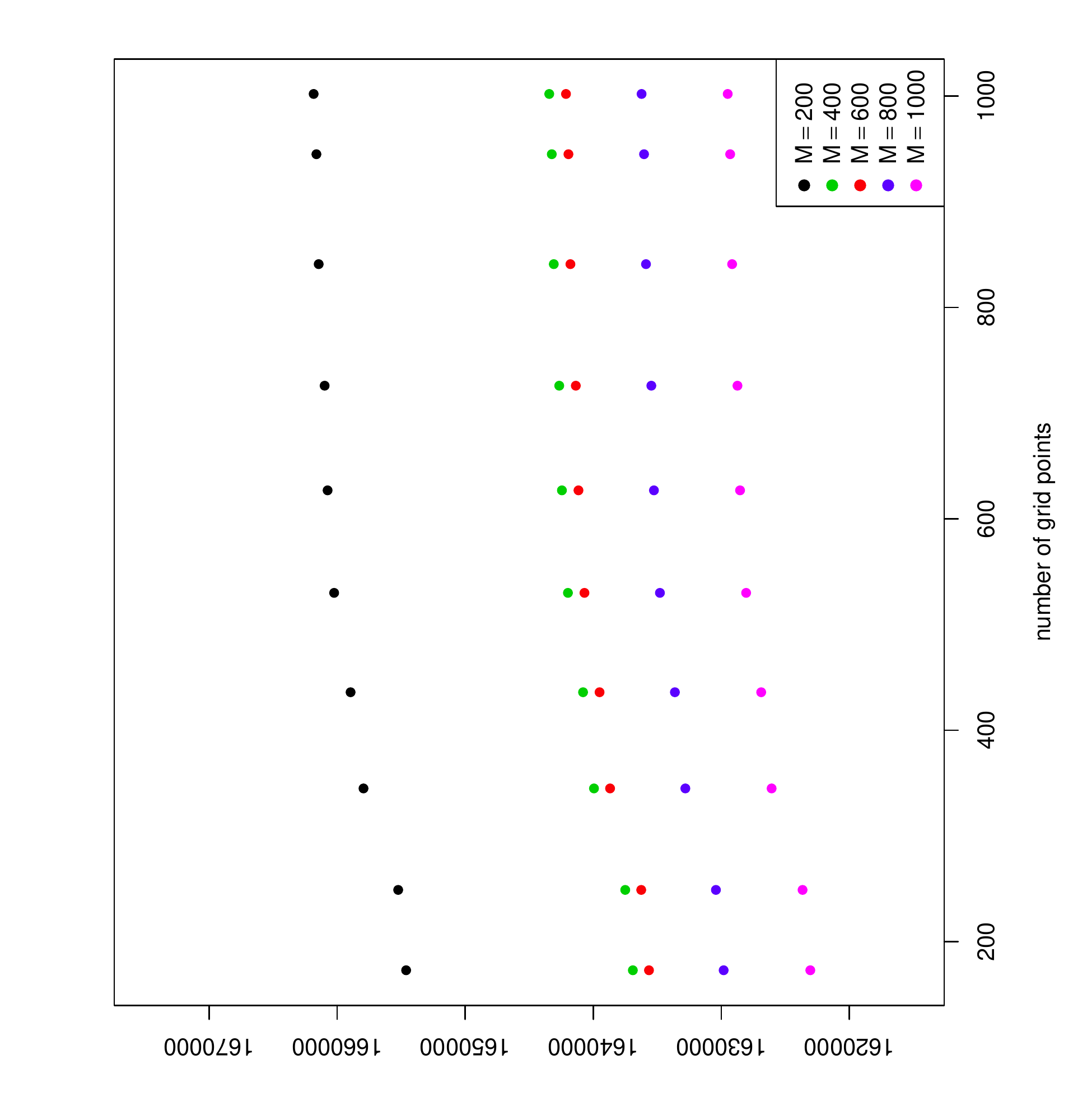}
\caption{Value of the gas storage for different numbers of grid points using the Multinomial-Tree Algorithm (left, with different $\dt$) and the Least-Square Algorithm (right, with different $M$)}\label{fig:grid}
\end{figure}

Next we want to take a closer look at the Multinomial-Tree Algorithm and see how $\dt$ influences the value of the gas storage. Table \ref{tab:tree-deltat} shows the computed values of the gas storage using different trees (cp. Figure \ref{fig:pircetreeinalgorithm}) with $\dt = 1, 1/2, 1/3, 1/4, 1/5$. Since the difference between $\dt=1/4$ and $\dt=1/5$ is rather small, we have chosen $\dt=1/4$.

\renewcommand{\arraystretch}{1.3}
\begin{table}[htbp]
\begin{tabular}{|l | c | c |c |c|c| }
\hline
$\dt$ & $1$ & $1/2$ & $1/3 $&$1/4 $& $1/5$\\ \hline
Value of the gas storage (in \pounds) & $1669631$ & $1655893$&  $1651499$& $1648229$ &$1647823$\\ \hline
\end{tabular}\\[2mm]
\caption{Value of the gas storage using the Multinomial-Tree Algorithm and different values of $\dt$.}\label{tab:tree-deltat}
\end{table}
In case of the Least-Square Algorithm we need to decide how many paths are to be simulated. Figure \ref{fig:ls-M} shows the value of the gas storage as a function of $M$ for $5$ simulation studies. As basis functions we took $1,x, x^2, x^3$. The variance of the value still seems pretty high for $M=1000$, which is due to a spike in the lowest simulation. Thus we take a further look at more simulations in Figures \ref{fig:ls-hist-M} and \ref{fig:ls-box}, that show histograms and boxplots for $100$ simulations each using $M=1000$ and $M=2000$. In the histograms we additionally plotted the means, that are $1644828$ for $M=1000$ and $1645134$ for $M=2000$. These means are also indicated in the boxplots as star points together with the standard deviations, that are $9109$ and $6516$ respectively. One can see that the difference in mean is not high, where the variance as we would expect, decays from $M=1000$ to $M=2000$ by a bigger amount.

\begin{figure}[htbp]
\centering
\includegraphics[trim = 20mm 10mm 5mm 10mm,angle={270}, width= 0.4\textwidth]{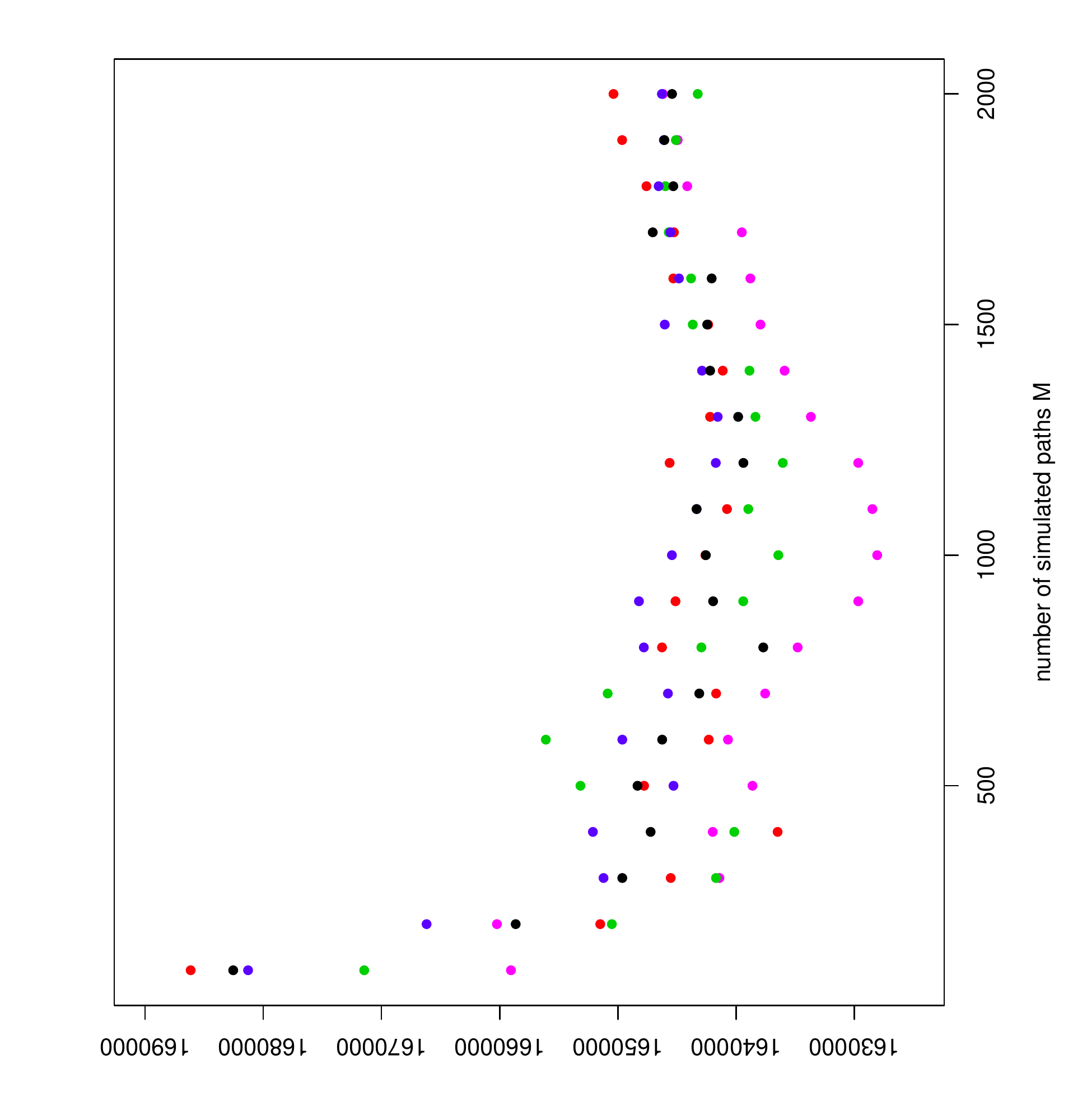}
\caption{Value of the gas storage dependent on the number simulations $M$ using the Least-Square Algorithm}\label{fig:ls-M}
\end{figure}

\begin{figure}[htbp]
\centering
\includegraphics[trim = 10mm 10mm 5mm 10mm,angle={270}, width= 0.4\textwidth]{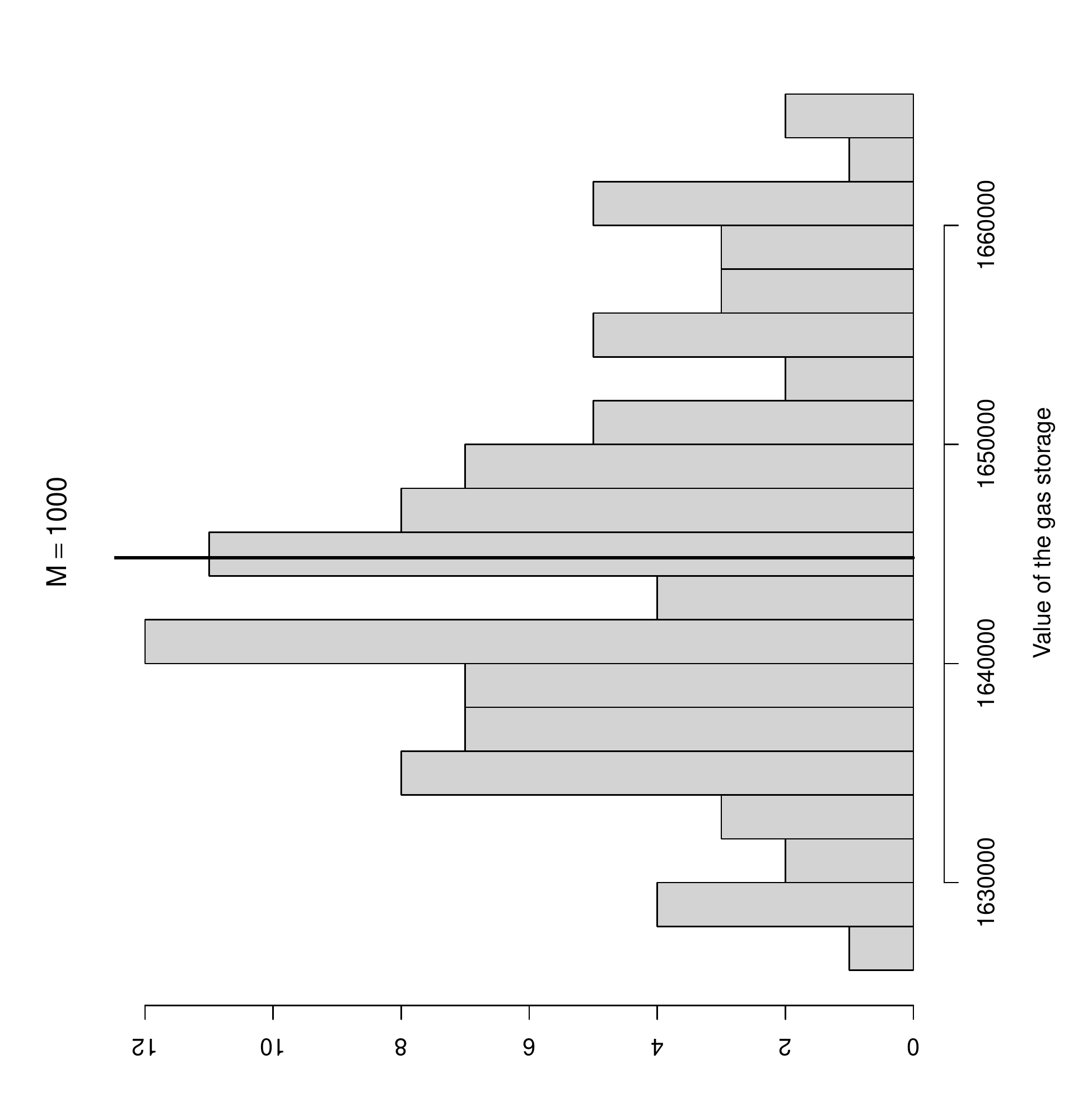}
\hspace{5mm}
\includegraphics[trim = 10mm 10mm 5mm 10mm,angle={270}, width= 0.4\textwidth]{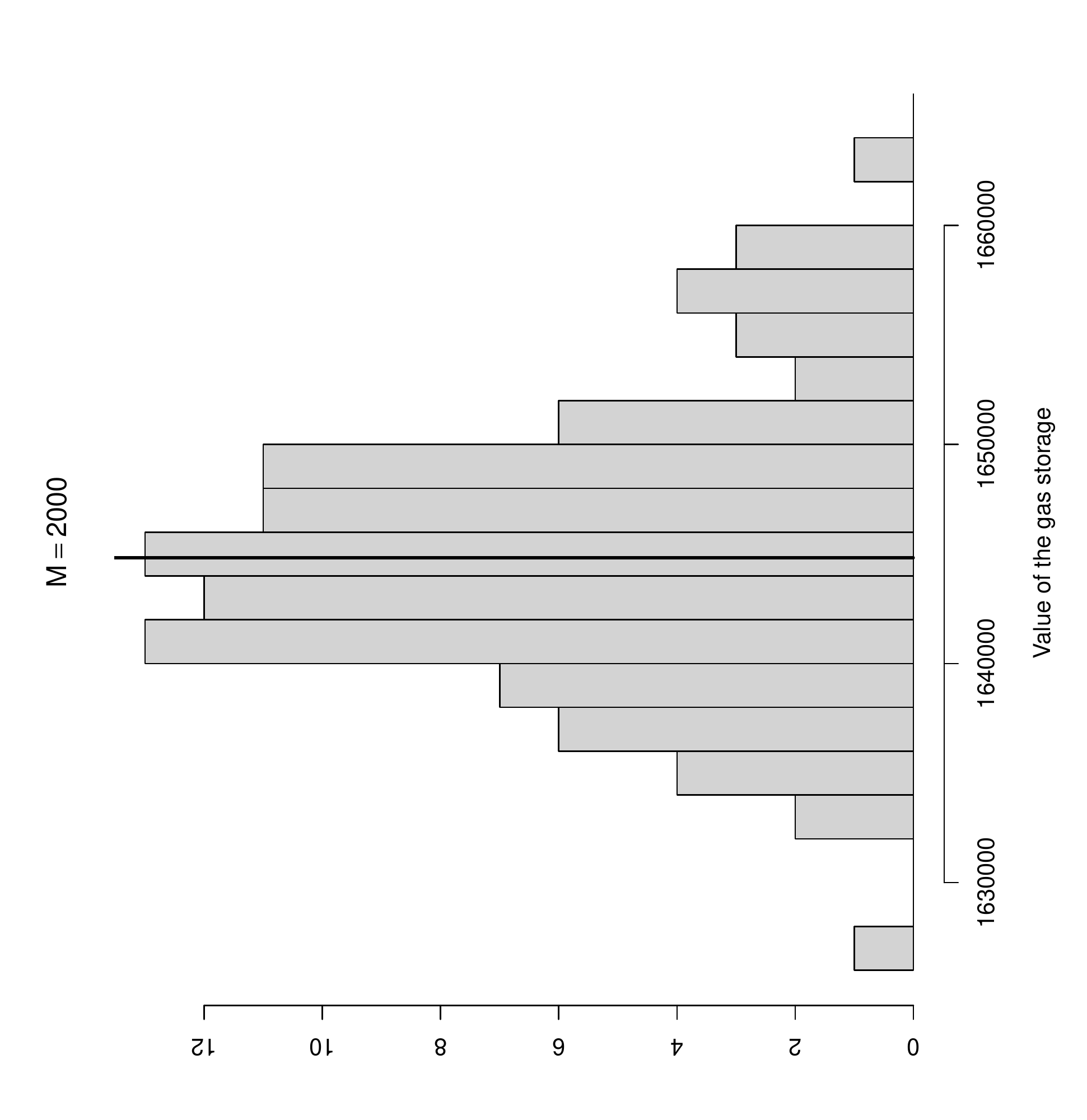}
\caption{Histogramm of the Value of the gas storage using the Least-Square Algorithm}\label{fig:ls-hist-M}
\end{figure}

\begin{figure}[htbp]
\centering
\includegraphics[trim = 20mm 10mm 5mm 10mm,angle={270}, width= 0.4\textwidth]{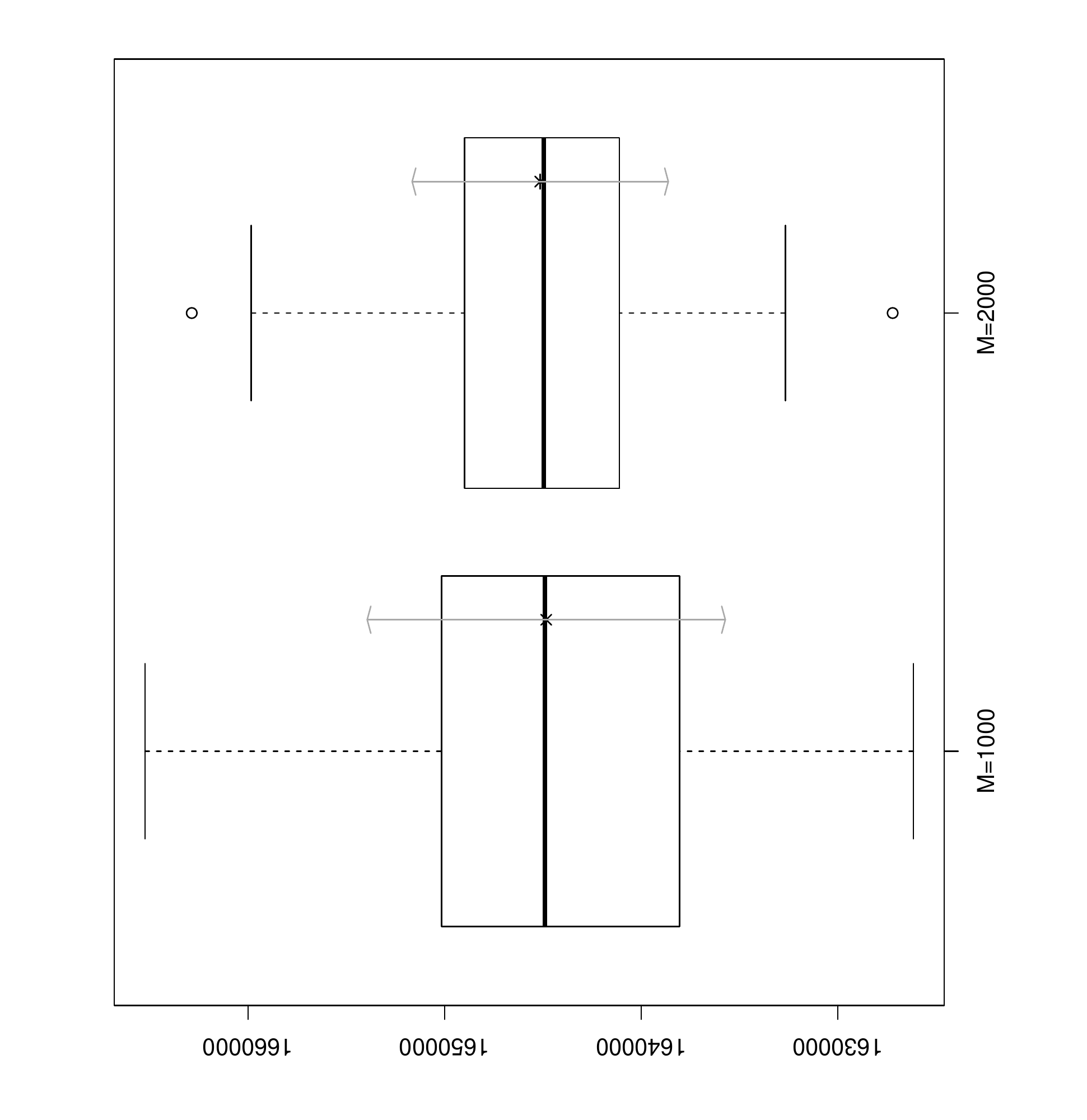}
\caption{Boxplot of the Value of the gas storage using the Least-Square Algorithm}\label{fig:ls-box}
\end{figure}

\begin{remark}[Comparison of the algorithms]
As we can see both algorithms return a value of approximately $1.64$ Million \pounds. While the Multinomial-Tree Algorithm only needs to be run once, since we are computing in the end the ``exact'' value on the grid, the Monte-Carlo Algorithm needs to be run several times to get a better impression of the estimate.  This fact makes the Multinomial-Tree Algorithm in case of runtime more interesting compared to the Least-Square Algorithm. Although the Multinomial-Tree Algorithm might be more complicated in implementation, especially when $\dt$ gets even smaller than $1/5$, once it is implemented it runs faster than the Least-Square Algorithm, e.g. because of less grid points. The advantage of the Least-Square Algorithm is clearly the applicability to a wider range of price models.
\end{remark}

To get an impression of the structure of the optimal policy we now have a closer look at the policy bounds. Figures \ref{fig:bounds-tree} and \ref{fig:bounds-ls} show the policy bounds computed by the Multinomial-Tree Algorithm with $\dt=1/4$ and the Least-Square Algorithm with $M=1000$. The lower bounds $\ul b_n(p,1)$ and $\ul b_n(p,2)$ are marked light blue and yellow, where the upper bounds $\ol b_n(p,1)$ and $\ol b_n(p,2)$ are marked blue and orange. The figure shows three pictures of all bounds, where we ``zoom in'' to the part where the bounds do not equal $b^\mi$ or $b^\ma$. The picture at the right shows just the bounds at regime $2$ in order to see more clearly that they have the same structure as those in regime $1$.
Recall that the optimal policy is to inject gas to possibly reach $\ul b$ if the current amount of gas is below $\ul b$ and if the current amount of gas is above $\ol b$ to withdraw gas if possible to reach $\ol b$, and do nothing in between.

As we can see from the figures the bounds are decreasing as the price increases, which follows the intuitive decision to sell gas if the price is high and to buy gas if the price is low. Also in many cases the bounds equal $b^\mi$ or $b^\ma$ that corresponds to the decision withdraw all gas possible and inject all gas possible, respectively. There is only a small band around the time dependent means of the price model where these bounds do not equal those specific cases. The gap, where ``do nothing'' is optimal, is caused by the fact that $k(p)$ and $e(p)$ are not equal, since those two functions make the difference in the optimization problems that need to be solved to get the policy bounds. The policy bounds at the end of the time horizon highly depend on the terminal reward function and thus have a different structure than the ones before.\\

\begin{figure}[htbp]
\centering
\includegraphics[trim = 20mm 10mm 5mm 10mm,angle={270}, width= 0.23\textwidth]{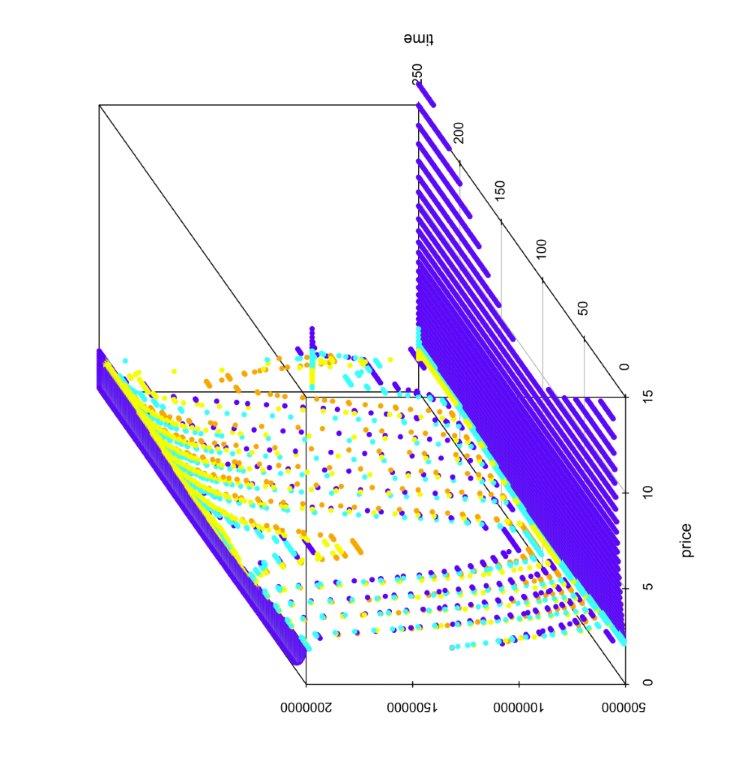}
\hspace{1mm}
\includegraphics[trim = 20mm 10mm 5mm 10mm,angle={270}, width= 0.23\textwidth]{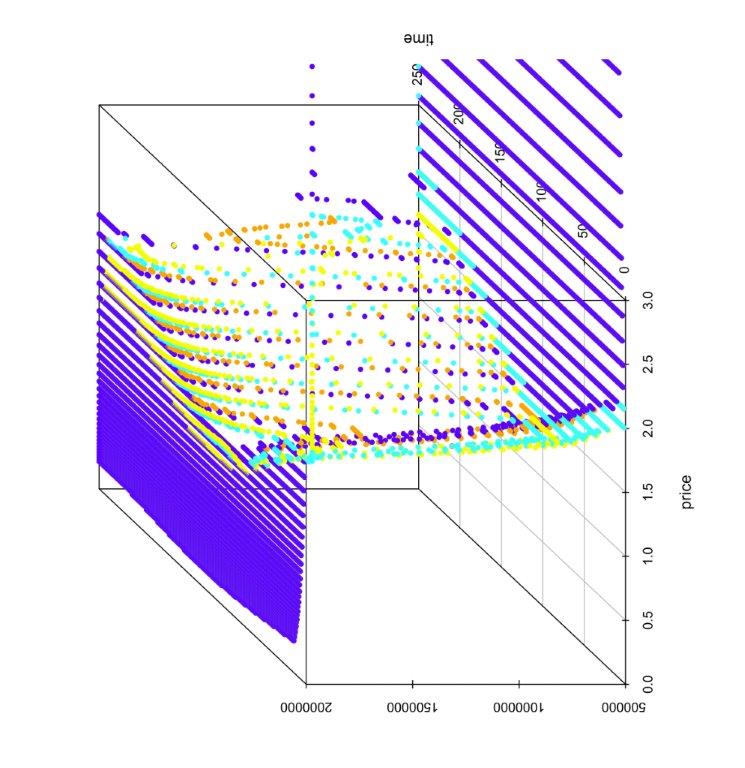}
\hspace{1mm}
\includegraphics[trim = 20mm 10mm 5mm 10mm,angle={270}, width= 0.23\textwidth]{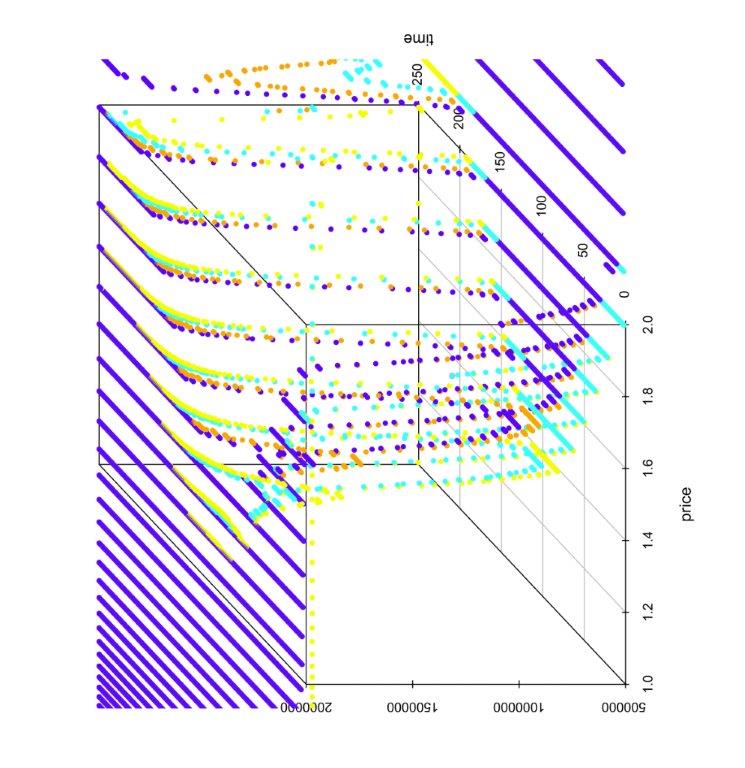}
\hspace{1mm}
\includegraphics[trim = 20mm 10mm 5mm 10mm,angle={270}, width= 0.23\textwidth]{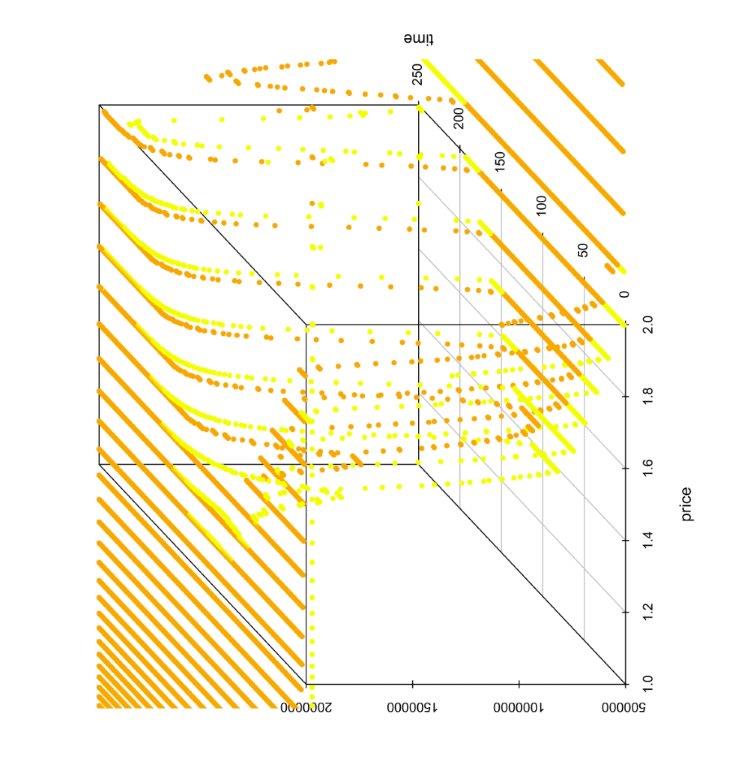}
\caption{Policy bounds using Tree Algorithm}\label{fig:bounds-tree}
\end{figure}

\begin{figure}[htbp]
\centering
\includegraphics[trim = 20mm 10mm 5mm 10mm,angle={270}, width= 0.23\textwidth]{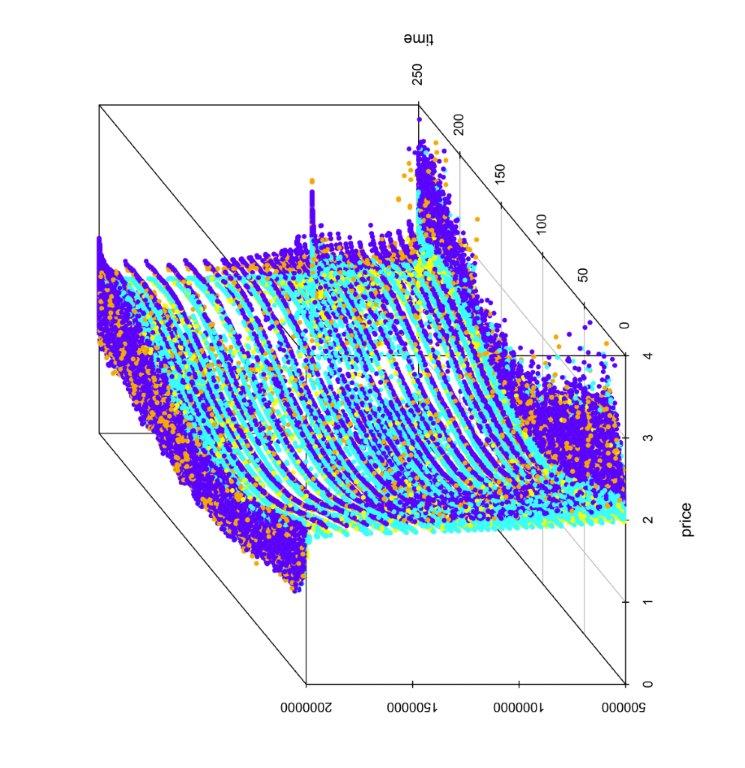}
\hspace{1mm}
\includegraphics[trim = 20mm 10mm 5mm 10mm,angle={270}, width= 0.23\textwidth]{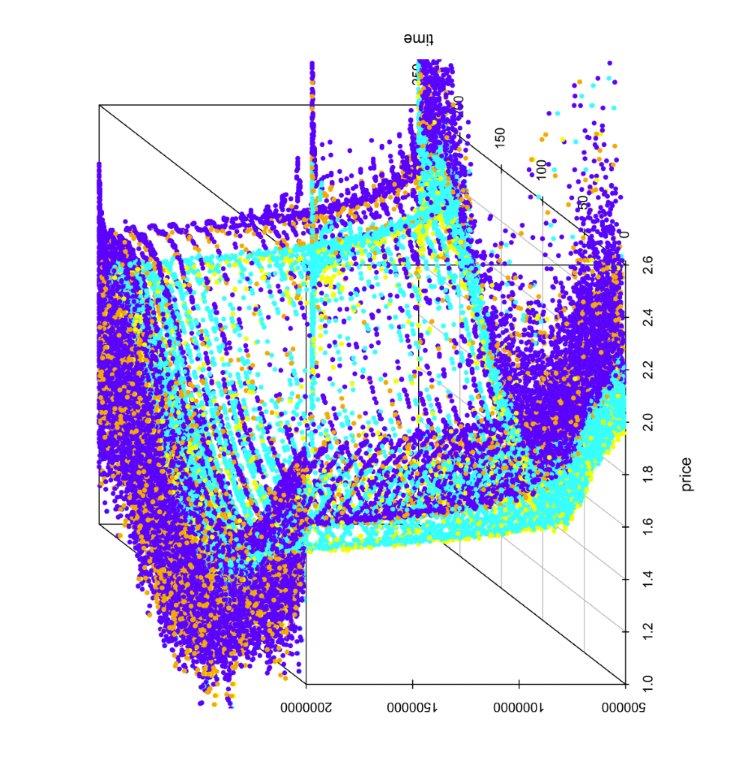}
\hspace{1mm}
\includegraphics[trim = 20mm 10mm 5mm 10mm,angle={270}, width= 0.23\textwidth]{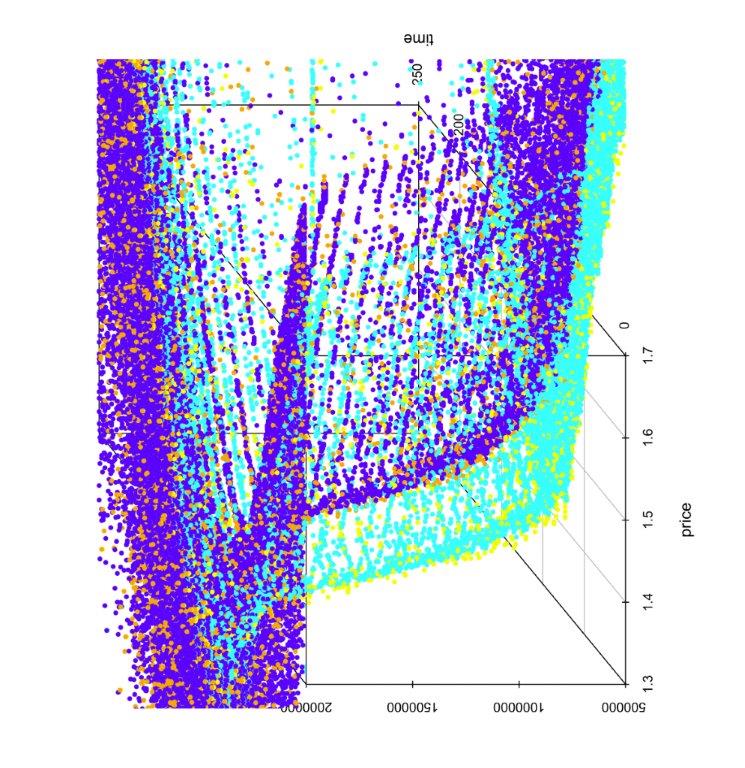}
\hspace{1mm}
\includegraphics[trim = 20mm 10mm 5mm 10mm,angle={270}, width= 0.23\textwidth]{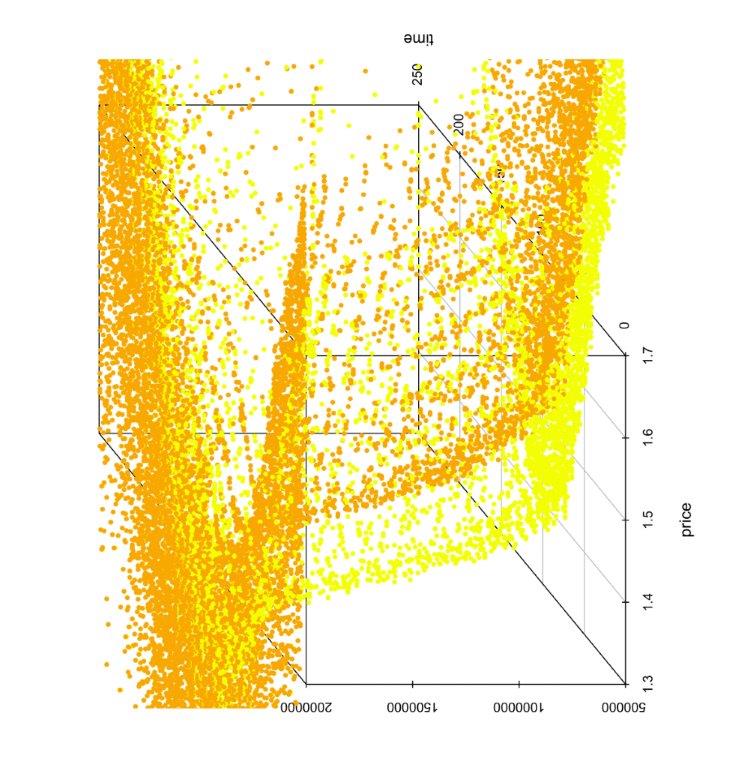}
\caption{Policy bounds using Least-Square Algorithm}\label{fig:bounds-ls}
\end{figure}

As we can guess from the figures of the policy bounds, the optimal policy is in many cases of ``bang-bang''-type in the sense that it is optimal to inject as much gas as possible, to withdraw as much gas as possible or do nothing. This policy is a rather intuitive policy. We compare the best bang-bang policy which we compute by restricting the decisions in the algorithms to the extreme ones to the optimal policy.

Looking at Table \ref{tab:value-banbang-vs-structure} we see that using the best ``bang-bang'' type policy $\pi^\circ$ leads to a lower value than using the optimal policy $\pi^\star$, but the difference is rather small. It lies below $1\%$. \\

\begin{table}[htbp]
\renewcommand{\arraystretch}{1.3}
\begin{tabular}{|l |c | c| c|}
\hline
 & value of gas storage & value of gas storage & difference\\
 method & using optimal $\pi^\star$ & using best bang-bang $\pi^\circ$ & in procent\\ \hline
Multinomial-Tree Algorithm & $1648229$ & $1637366$ & $0.660$\\ \hline
Least-Square Algorithm & $1653958$ & $1653361$ & $0.036$\\ \hline
\end{tabular}\\[2mm]
\caption{Value of the gas storage using the best bang-bang type policy $\pi^\circ$ and the optimal policy $\pi^\star$.}\label{tab:value-banbang-vs-structure}
\end{table}

Thus we look at how many times we actually decide not to inject/withdraw all gas possible. Table \ref{tab:strategy-bangbang-vs-structure} shows the absolute value in three sample paths, see Figures \ref{fig:tree-path-and-volumlevel} and \ref{fig:ls-path-and-volumlevel}. Here optimal decision $\ul b_n(p,r)$ or $\ol b_n(p,r)$ means deciding to hit those bounds in the next step and not having injected or withdrawn all possible gas. We can see that we use those actions very rarely. So in the end, if we are using a rather intuitive bang-bang-type policy the error is not that big, though we need to realize that there is a difference.

\begin{table}[htbp]
\renewcommand{\arraystretch}{1.3}
\centering
\begin{tabular}{|l|l|c|c|c|c|c|}
\cline{3-7}
\multicolumn{2}{c|}{ } & \multicolumn{5}{c|}{optimal decision} \\ \cline{1-7}
method & path & $\ul b_n(p,r)$ & $\ol b_n(p,r)$& $i^\ma$ & $i^\mi$ & $0$ \\ \hline
\multirow{3}{3cm}{Multinomial-Tree Algorithm} & red &  1 & 3 & 139 &48 &  59\\ \cline{2-7}
& green& 0  & 14 & 131 & 43 & 62 \\ \cline{2-7}
& blue &  1 &  3 &149  & 30 & 61 \\ \cline{1-7}
\multirow{3}{3cm}{Least-Square Algorithm} & red &  0& 0 & 94 & 20 & 136 \\ \cline{2-7}
& green &  1 & 5 &120 & 49 & 75\\ \cline{2-7}
& blue &  1 & 3 & 120 & 49 & 77 \\ \cline{1-7}
\end{tabular}\\[2mm]
\caption{Absolute values of the used policies in three sample paths}\label{tab:strategy-bangbang-vs-structure}
\end{table}

Finally Figures \ref{fig:tree-path-and-volumlevel} and \ref{fig:ls-path-and-volumlevel} show on the one hand three simulated gas price paths as well as the corresponding underlying Markov chain for the regimes. On the other hand the figures show the volume level over time that results from using the optimal policy computed with the Multinomial-Tree and the Monte Carlo Algorithm respectively. In total we see that, there is no general rule as ``fill in summer, empty in winter'' for the gas storage, since in the end it highly depends on the current price. Recall that time $0$ corresponds to the beginning of February. For example we see that in figure \ref{fig:tree-path-and-volumlevel} at time point about $180$ we use the high spike in the green curve to empty the storage and then fill up the storage again. Or in the blue curve we use  the rather low prices at the beginning to fill up the storage before emptying it. In figure \ref{fig:ls-path-and-volumlevel} we see that we use the rather high prices in the red curve to empty the storage really fast, where later the low prices result in a rather fast filling of the storage starting from time point $90$.

\begin{figure}[htbp]
\centering
\includegraphics[trim = 10mm 10mm 5mm 10mm,angle={270}, width=0.7\textwidth]{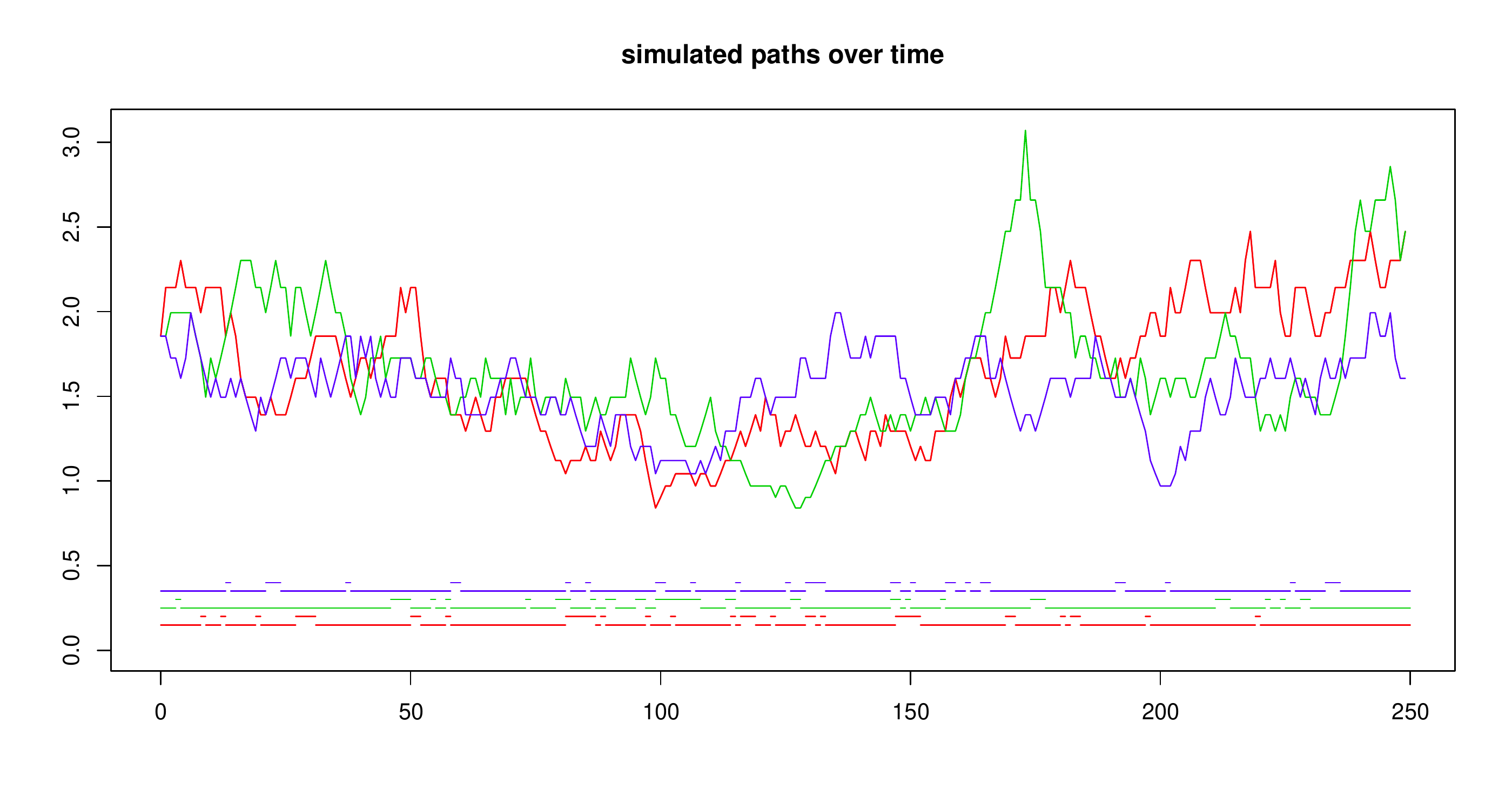}

\includegraphics[trim = 15mm 10mm 15mm 10mm,angle={270}, width=0.7\textwidth]{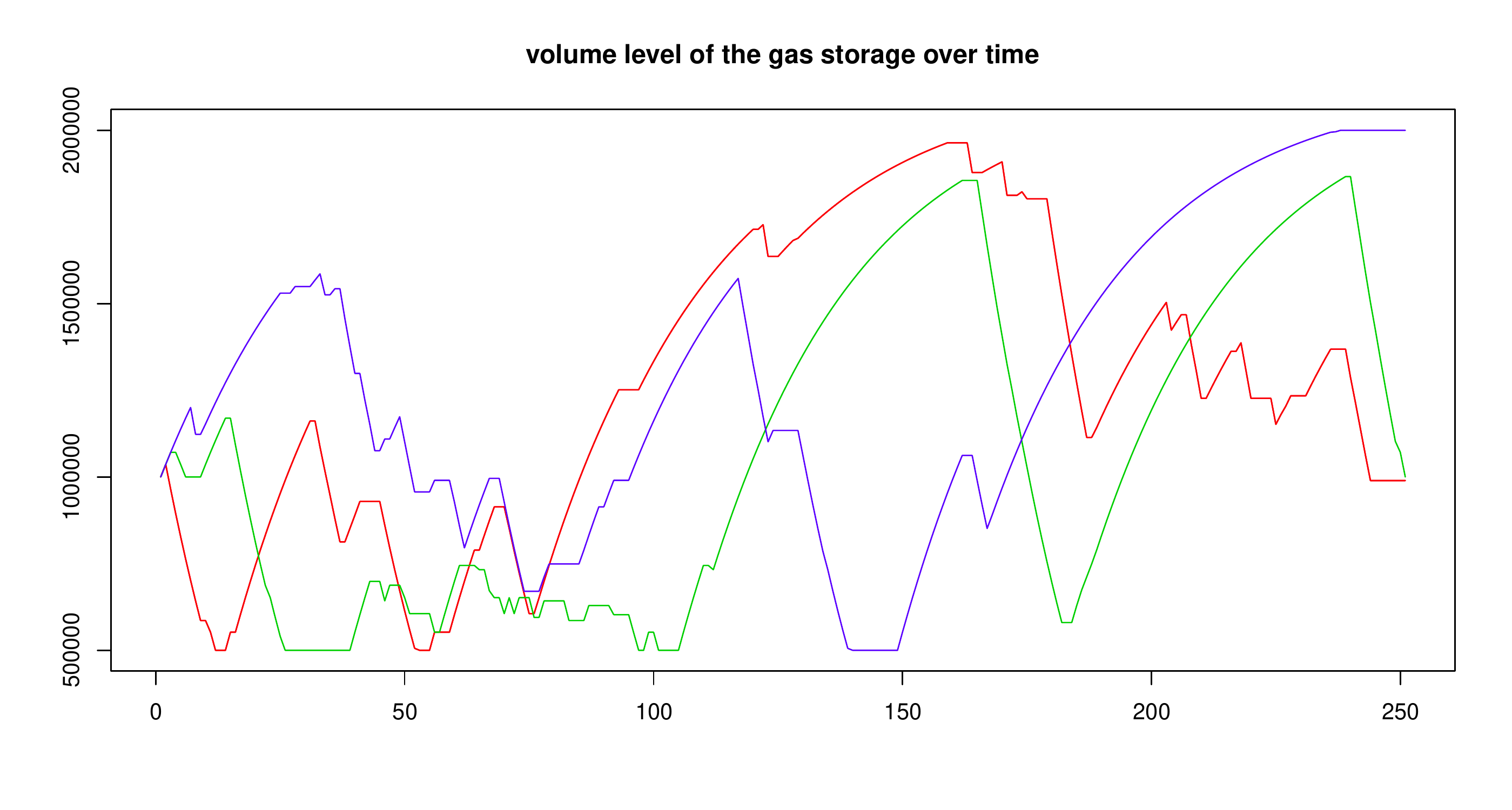}
\caption{Three simulated paths and the corresponding volume level using the optimal policy in the Multinomial-Tree Algorithm}\label{fig:tree-path-and-volumlevel}
\end{figure}

\begin{figure}[htbp]
\centering
\includegraphics[trim = 10mm 10mm 5mm 10mm,angle={270}, width=0.7\textwidth]{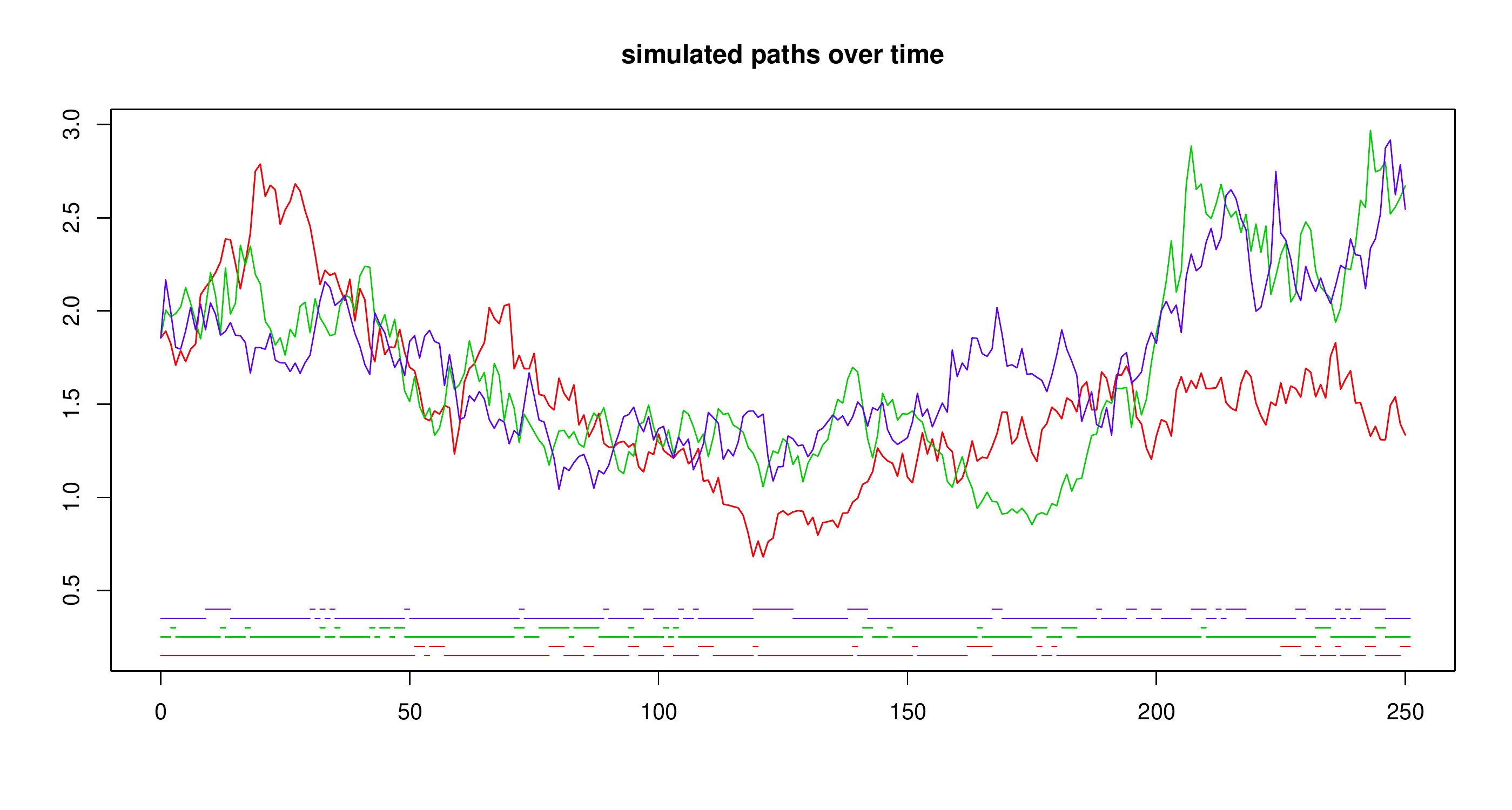}

\includegraphics[trim = 15mm 10mm 15mm 10mm,angle={270}, width=0.7\textwidth]{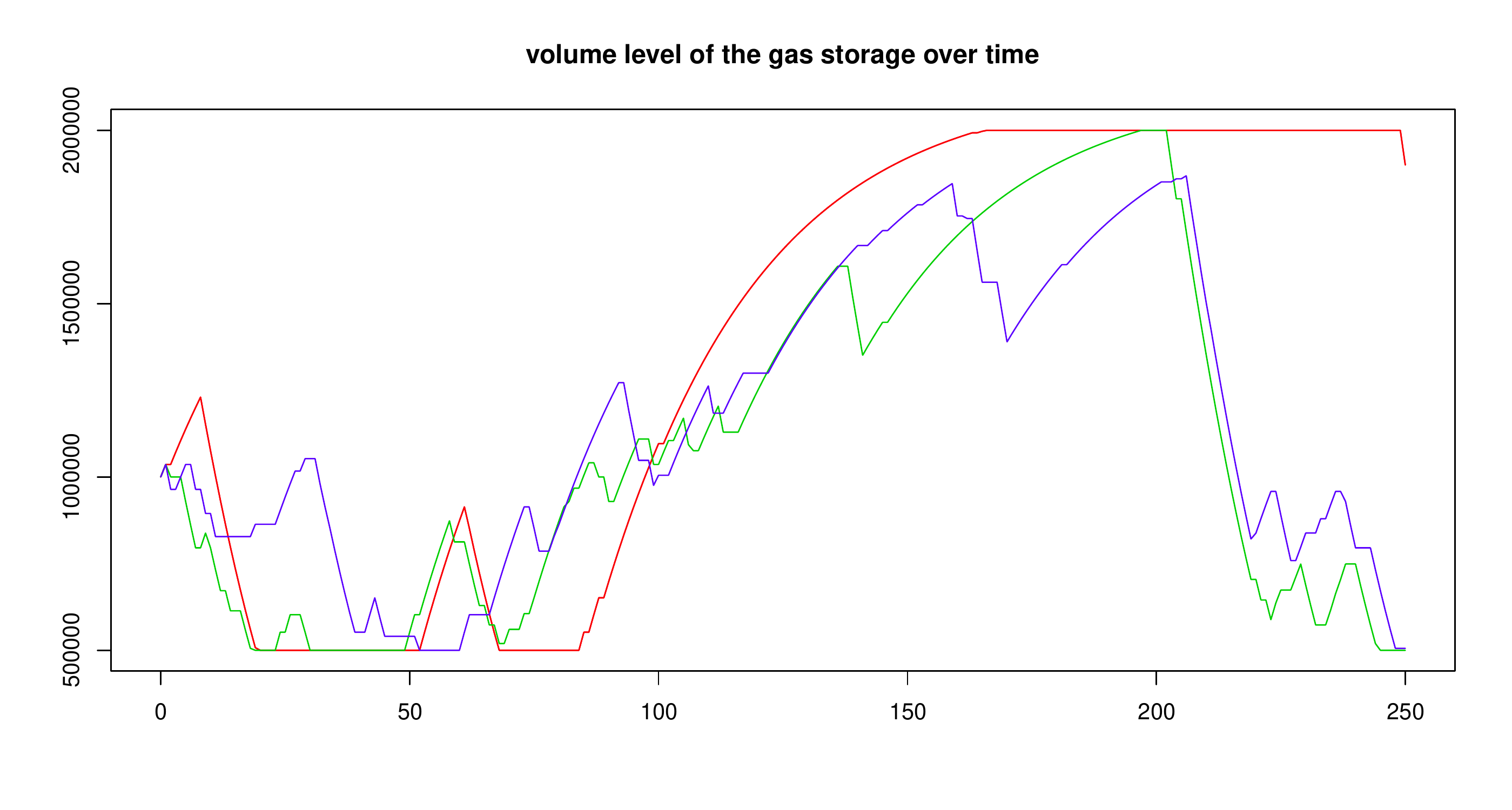}
\caption{Three simulated paths and the corresponding volume level using the optimal policy in the Least-Square Algorithm}\label{fig:ls-path-and-volumlevel}
\end{figure}

\section{Conclusion}
As we have seen, the result from \cite{Sec10} concerning the structure of the optimal policy can be extended to price processes with regime switching and storage-dependent injection and withdrawal rates. Both algorithms, the Multinomial-Tree Algorithm and the Least-Square Algorithm, work quite well for the price models with regime switching. Hints on how to choose the critical parameters of the algorithm have been given. In particular a clever choice of the grid of the gas storage level has been shown to be important. On the other hand we also conclude that instead of maximizing over all feasible policies it is almost optimal to maximize over bang-bang policies only.

Further research needs to be done in order to extend the presented algorithms to price models with jumps. While it should not be a problem to include jumps in the Least-Square Algorithm, since the process only needs to be simulated in the end, it would be interesting to find out if there is a way to approximate the price model by a recombing tree in a similar way. Further \cite{boogert2011gas} and \cite{BdJ08} present some simulation studies on the choice of the basis functions in the Least-Square Algorithm. Their result that the choice of basis functions does not have a big effect on the value of the algorithm needs to be verified in the Least-Square Algorithm with regime switching. Here it would also be interesting to investigate in what extend a regression in $(p,r)$ has an effect on the storage value compared to the separate regressions in each regime. \\

\section{Acknowledgement}
The authors would like to thank Alfred M\"uller for helpful comments and suggestions on the PhD thesis of Viola Riess which also entered the paper.

\bibliographystyle{abbrv}

\end{document}